\renewcommand\thefigure{\thesection.\@arabic\c@figure}
\renewcommand\thetable{\thesection.\@arabic\c@table}
\renewcommand\theexample{\thesection.\@arabic\c@example}
\newcommand{\bz}{{\bf z} }
\newcommand{\be}{\begin{equation}}
\newcommand{\ee}{\end{equation}}
\newcommand{\ba}{\begin{array}}
\newcommand{\ea}{\end{array}}
\newcommand{\bea}{\begin{eqnarray}}
\newcommand{\eea}{\end{eqnarray}}
\newcommand{\beas}{\begin{eqnarray*}}
\newcommand{\eeas}{\end{eqnarray*}}
 \newcommand{\bx}{{\bf x} }
\newcommand{\tphi}{{\tilde{\phi}} }
\newcommand{\bphi}{{\bar{\phi}} }
\newcommand{\brho}{{\bar{\rho}} }
\DeclareMathOperator*{\argmin}{arg\,min}
\newcommand{\tPhi}{\tilde{\Phi}}
\newcommand{\hphi}{\hat{\phi}}
\newcommand{\hrho}{\hat{\rho}}
\newcommand{\tA}{\tilde{A}}
\newcommand{\tF}{\tilde{F}}
\newcommand{\wfig}{7}
\newcommand{\wwfig}{6}
\newcommand{\hfig}{5}
\begin{document}

\title{A normalized gradient flow method with attractive-repulsive splitting for computing ground states of Bose-Einstein condensates with higher-order interaction}

\author{Xinran Ruan}
\institute{
X. Ruan \at
Laboratoire J.-L. Lions, \\
Universit\'{e} Pierre et Marie Curie,\\
 75252 Paris cedex 05, France. 
\email{ruan@ljll.math.upmc.fr}
}

\date{Received: date / Accepted: date}
% The correct dates will be entered by the editor

\maketitle

\begin{abstract}
In this paper, we generalize the normalized gradient flow method to compute the ground states of Bose-Einstein condensates (BEC) with higher order interactions (HOI), which is modelled via the modified Gross-Pitaevskii equation (MGPE). 
Schemes constructed in naive ways suffer from severe stability problems due to the high restrictions on time steps. 
To build an efficient and stable scheme,  
we split the HOI term into two parts with each part treated separately. 
The part corresponding to a repulsive/positive energy is treated semi-implicitly while the one corresponding to an attractive/negative energy is treated fully explicitly. 
Based on the splitting, we construct the BEFD-splitting and BESP-splitting schemes. 
A variety of numerical experiments  shows that the splitting will improve the stability of the schemes significantly. 
Besides, we will show that the methods can be applied to multidimensional problems and to the computation of the first excited state as well.

%\keywords % for CiCP
\keywords{Bose-Einstein condensate, higher order interaction, modified Gross-Pitaevskii equation, 
 ground state,  normalized gradient flow, attractive-repulsive splitting}

\end{abstract}

\section{Introduction}\setcounter{equation}{0}

The Bose-Einstein condensate (BEC), which is a many body system with low density and low temperature, has drawn great attention since its first experimental realization in 1995 \cite{Anderson,Bradley,Davis} as it offers a way to measure the microscopic quantum mechanical properties in a macroscopic scale. 
The Gross-Pitaevskii equation (GPE), which is a  mean field approximation by approximating the interaction between particles by an external pseudo-potential \cite{Gross,TGmath,Lieb,Pitaevskii,PitaevskiiStringari}, 
has gained considerable research interest due to its simplicity and effectiveness in describing Bose-Einstein condensates (BEC). 
One key assumption in deriving GPE is that the interaction between particles can be well approximated by the binary interaction in the form 
\be\label{eq:int_gpe}
V_{\rm{int}}(\bx_1-\bx_2)=g_0\delta(\bx_1-\bx_2), \quad \bx_1,\bx_2\in\mathbb{R}^3, 
\ee
where $\delta(\cdot)$ is the Dirac delta function and $g_0=\frac{4\pi \hbar^2 a_{s}}{m}$ is the
contact interaction strength with $a_s$ being the $s$-wave scattering
length, $\hbar$ being the reduced Planck constant and $m$ being the mass of the particle \cite{TGmath}. 
%The binary interaction \eqref{eq:int_gpe}, which is independent of the shape of the pseudo-potential, approximates the collision between particles and results in the classical GPE \cite{TGmath}.
%\be\label{eq:gpe}
% i\partial_t \psi=\left[-\frac{1}{2}\nabla^2
%+V(\bx)
%+\beta|\psi|^2\right]\psi,\quad t\ge0,\,\bx\in\mathbb{R}^d.
% \ee 
The theory has shown excellent agreement with most experiments.  
However, the validity of the approximation needs to be carefully examined in certain cases, such as in the experiments which take advantage of the Feshbach resonances in cold atomic collision \cite{Zinner}. 
In such cases, higher order interaction (HOI) (or effective range expansion) as a correction to the Dirac delta function has to be taken into account.
In \cite{Collin,Esry}, the higher order interaction correction is analyzed and a new binary interaction is derived as 
\be\label{eq:int}
V_{\rm{int}}(\bz)=g_0\left[\delta(\bz)+\frac{g_1}{2}\left(\delta(\bz)\nabla^2_{\bz}+\nabla^2_{\bz}\delta(\bz)\right)\right],
\ee
where $g_0$ is defined as before, $\bz=\bx_1-\bx_2\in\mathbb{R}^3$ and the HOI correction is given by the parameter
$g_1=\frac{a_s^2}{3}-\frac{a_sr_e}{2}$ with $r_e$ being the effective range of the two-body interaction. 
When $r_e=\frac{2}{3}a_s$, it is for the hard sphere potential and reduces back to the classical case.
In certain cases, $g_1$ can be extremely large \cite{Zinner} and, therefore, the HOI can no longer be ignored. 
With this new choice of the binary interaction \eqref{eq:int}, 
the modified Gross-Pitaveskii equation (MGPE)\cite{JJG,Fu,Collin,Ruan_thesis,Veksler} is derived as 
\be\label{eq:mgpe_origin}
 i\hbar\partial_t \psi=\left[-\frac{\hbar^2}{2m}\nabla^2
+V(\bx)
+Ng_0\left(|\psi|^2+\frac{g_1}{2}\nabla^2|\psi|^2\right)\right]\psi,\, t\ge0,\,\bx\in \mathbb R^3
 \ee
where $N$ is the number of particles,
$V(\bx)$ is a real-valued external trapping potential and $\|\psi(\bx,t)\|_2=1$.

In experiments, the confinement induced by the external potential might be strong in one or two directions. 
As a result, the BEC in 3D could be well described by the MGPE in 2D or 1D, respectively, by performing a proper dimension reduction \cite{Ruan_thesis,BJP,mgpe-asym}. 
Finally, we get the dimensionless modified GPE (MGPE) in $d$-dimensions ($d=1,2,3$) as 
\be\label{mgpe}
 i\partial_t \psi=\left[-\frac{1}{2}\Delta
+V(\bx)
+\beta|\psi|^2-\delta\Delta(|\psi|^2)\right]\psi,\quad t\ge0,\,\bx\in\mathbb{R}^d,
 \ee
 with mass $N(t):=\int_{{\mathbb R}^d}|\psi(\bx,t)|^2d\bx$ and energy 
\be \label{energy}
E(\psi(\cdot,t)):=\int_{{\mathbb R}^d}\biggl[
\frac12|\nabla\psi|^2+V(\bx)|\psi|^2+\frac{\beta}{2}|\psi|^4+\frac{\delta}{2}|\nabla|\psi|^2|^2\biggl]\,d\bx.
\ee
It is easy to check that the mass and the energy are conserved, i.e. 
 \be
N(t)\equiv N(0)=1, \quad E(\psi(\cdot,t))\equiv E(\psi(\cdot,0)). 
\ee

A fundamental problem in studying BEC is to find its stationary
states, especially the ground state which is the stationary state with the lowest energy. 
Mathematically speaking, the ground state $\phi_g^{\beta,\delta}:=\phi_g^{\beta,\delta}(\bx)$ of the MGPE (\ref{mgpe})
is defined as the minimizer of the energy functional (\ref{energy}) under the normalization
constraint, i.e.
\begin{equation}\label{eq:minimize_mgpe}
    \phi_g ^{\beta,\delta}:=  \argmin_{\phi \in S}
    E\left(\phi\right),
  \end{equation}
where $S$ is  defined as 
\be\label{eq:nonconset}
S:=\left\{\phi \, | \,\|\phi\|_2=1,\quad E(\phi)<\infty\right\}.
\ee
$E_g^{\beta,\delta}:=E(\phi_g^{\beta,\delta})$ is called the ground state energy. 
The Lagrangian of the problem \eqref{eq:minimize_mgpe} implies that 
the ground state $\phi_g^{\beta,\delta}$ satisfies the following nonlinear eigenvalue problem
\be\label{eq:eig_mgpe}
\mu \phi=\left[-\frac{1}{2}\Delta+V(\bx)+\beta|\phi|^2-\delta\Delta(|\phi|^2)\right]\phi,
\ee
where the corresponding eigenvalue (also named chemical potential)  $\mu$ can be computed as
\begin{align} 
%&E(\phi)=\int_{{\mathbb R}^d}\biggl[
%\frac12|\nabla\phi|^2+V(\bx)|\phi|^2+\frac{\beta}{2}|\phi|^4+\frac{\delta}{2}\left|\nabla|\phi|^2\right|^2\biggl]\,d\bx,\label{eq:E_mgpe}\\
 \label{eq:mu_mgpe}
&\mu=\int_{{\mathbb R}^d}\biggl[
\frac12|\nabla\phi|^2+V(\bx)|\phi|^2+\beta|\phi|^4+\delta\left|\nabla|\phi|^2\right|^2\biggl]\,d\bx.
\end{align}

It is worth noticing that, when $\delta\neq0$, the ground state exists if and only if $\delta>0$ \cite{mgpe-th}. Furthermore, the ground state is unique if we have both $\beta>0$ and $\delta>0$ \cite{mgpe-th}.
When $\delta=0$, the MGPE degenerates to the GPE. And the existence and uniqueness of the ground state has been thoroughly studied and we refer the readers to \cite{Bao2014,Bao2013,PitaevskiiStringari}. 
 Therefore, throughout the paper, we will only consider the case $\delta\ge0$ for the computation of the ground state of  MGPE.

Numerous numerical methods have been proposed to compute the ground state of the classical GPE, such as a Runge-Kutta spectral method with spectral discretization in space and Runge-Kutta type integration in time by Adhikari et al. in \cite{Muruganandam}, Gauss-Seidel-type methods in \cite{Chang} by Lin et al., a finite element method by directly minimizing the energy functional in \cite{BaoT} by Bao and Tang,
a regularized newton method by Wu, Wen and Bao in \cite{BaoWuWen}, 
a preconditioned nonlinear conjugate gradient method \cite{Antoine} and an adaptive finite element method \cite{Danaila} for the rotating BEC  , 
and so on. 
Among all the methods, the normalized gradient flow method, also named the imaginary time method in physics literatures \cite{Wz1,Bao2004,Chiofalo}, is extremely efficient and easy to implement.  
A Matlab toolbox named GPELab has been developed based on the method \cite{GPELab}. 
Due to its simplicity and efficiency, the method has  been generalized as well to spin-1 BEC \cite{spin_GPE,spin}, rotational BEC \cite{rotation_GPE,Krylov} and so on. 
It seems that the generalization of the method to MGPE \eqref{eq:eig_mgpe} would be trivial. However, it turns out that the term $\delta\Delta(|\psi|^2)\psi$ needs to be carefully dealt with since schemes constructed in naive ways suffer from severe stability problems.  
In this paper, we surprisingly find that a proper splitting of the term $\delta\Delta(|\psi|^2)\psi$ into two parts based on the attractive-repulsive splitting will overcome the problem. The details will be introduced later in Section \ref{subsec:gradientflow_splitting}.

The paper is organized as follows. In Section \ref{sec:NGF}, we introduce the continuous normalized gradient flow method for MGPE and show its discretization, which will be shown to be mass-conserved and energy diminishing. 
Then we introduce the discrete normalized gradient flow and derive the BEFD schemes constructed in naive ways. 
Analysis will be provided to show the high restrictions of BEFD on time steps. 
In Section \ref{sec:splitting}, we introduce the attractive-replulsive splitting of the term $\delta\Delta(|\psi|^2)\psi$ and construct the BEFD-splitting and BESP-splitting schemes by applying finite difference  and pseudo-spectral discretizations in space, respectively.
In Section \ref{sec:numeric}, a variety of numerical experiments will be performed to show that the splitting does improve the stability of the schemes significantly and the BEFD-splitting/BESP-splitting schemes have a great advantage over the methods constructed in naive ways. 
We will also apply the BEFD-splitting/BESP-splitting scheme to multidimensional problems as well as the computation of the first excited state. Finally, some conclusions are drawn in Section \ref{sec:conclusion}. 

\section{Normalized gradient flow and its discretization}\label{sec:NGF} 
The normalized gradient flow method is proven to be one of the most popular methods for computing the ground state of GPE due to its simplicity and efficiency. 
In this section, we will introduce the continuous normalized gradient flow as well as the discrete normalized gradient flow, showing its generalization to MGPE with detailed discretizations. 
\subsection{Continuous normalized gradient flow (CNGF)}

The continuous normalized gradient flow method (CNGF) can be viewed as applying the steepest descent method to the energy functional \eqref{energy} with a Lagrange multiplier for the normalization constraint.  
% Compared to GPE,  the MGPE has one more nonlinear term. 
It is proposed in \cite{Wz1} for computing the ground state of GPE 
 and can be generalized trivially to MGPE in the continuous level as  
\begin{align}
&\phi_t=\frac{1}{2}\Delta\phi-V(\bx)\phi-\beta|\phi|^2\phi+\delta\Delta(|\phi|^2)\phi+\mu_{\phi}(t)\phi, \label{CNGF1}\\
&\phi(\bx,t)=0 \text{ for } \bx\in\partial\Omega,\,\, t\ge0, \quad\text{ and }\quad\phi(\bx,0)=\phi_0(\bx),\quad \bx\in\Omega,
\end{align}
where $\Omega$ is the domain and $\mu_{\phi}(t)$ depending on $\phi:=\phi(\cdot,t)$ is defined as
\be\label{CNGF2}
\mu_{\phi}(t)=\frac{1}{\|\phi\|^2}\int_{\Omega}\left[\frac{1}{2}|\nabla\phi|^2+V(\bx)|\phi|^2+\beta|\phi|^4+\delta|\nabla(|\phi|^2)|^2\right]\,d\bx.
\ee
It can be shown that the CNGF scheme \eqref{CNGF1} conserves the mass and diminishes energy as stated in Lemma \ref{lem:CNGF-cts}. 
\begin{lemma}\label{lem:CNGF-cts}
Denote $\phi(\cdot,t)$ to be the solution of the CNGF scheme \eqref{CNGF1}-\eqref{CNGF2} at time $t$. 
For any initial value $\phi_0(\bx)$ satisfying $\|\phi_0\|=1$ and $\lim_{|\bx|\to\infty}\phi_0(\bx)=0$, we have the
CNGF scheme is normalization conservation and energy diminishing. To be more specific, we have 
\begin{align}
&\|\phi(\cdot,t)\|_2 = \|\phi_0\|_2,\label{CNGF:mass}\\
&\frac{d}{dt}E(\phi(\cdot,t))=-2\|\phi_t(\cdot,t)\|_2^2,\quad t\ge0.\label{CNGF:energy}
\end{align}
which implies that
\be
E(\phi(\cdot,t_1))\ge E(\phi(\cdot,t_2)),\quad 0\le t_1\le t_2<\infty.
\ee
\end{lemma}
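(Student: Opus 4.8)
The plan is to prove both identities by differentiating under the integral sign and integrating by parts, all boundary terms vanishing because $\phi(\cdot,t)$ — hence $\phi_t$ and $|\phi|^2$ — is zero on $\partial\Omega$ for every $t$ (equivalently, because of the decay at infinity when $\Omega=\mathbb{R}^d$). \textbf{Step 1 (mass conservation).} Differentiate $\|\phi(\cdot,t)\|_2^2=\int_\Omega|\phi|^2\,d\bx$ to get $\frac{d}{dt}\|\phi\|_2^2=2\,\mathrm{Re}\int_\Omega\bar\phi\,\phi_t\,d\bx$ and substitute \eqref{CNGF1} for $\phi_t$. One integration by parts gives $\mathrm{Re}\int\bar\phi\,\frac12\Delta\phi=-\frac12\int|\nabla\phi|^2$ and, since $\bar\phi\,\phi=|\phi|^2$ is real, $\mathrm{Re}\int\bar\phi\,\delta\Delta(|\phi|^2)\phi=\delta\int|\phi|^2\Delta(|\phi|^2)=-\delta\int|\nabla(|\phi|^2)|^2$; the potential and cubic terms contribute $-\int V|\phi|^2$ and $-\beta\int|\phi|^4$ directly, and the multiplier term gives $\mu_\phi\|\phi\|_2^2$. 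Collecting, $\frac{d}{dt}\|\phi\|_2^2=2\big(\mu_\phi\|\phi\|_2^2-\int[\tfrac12|\nabla\phi|^2+V|\phi|^2+\beta|\phi|^4+\delta|\nabla(|\phi|^2)|^2]\big)$, which is exactly $0$ by the definition \eqref{CNGF2} of $\mu_\phi$. Hence $\|\phi(\cdot,t)\|_2\equiv\|\phi_0\|_2$, which is \eqref{CNGF:mass}.

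\textbf{Step 2 (energy dissipation).} Write the right-hand side of \eqref{CNGF1} as $f(\phi)+\mu_\phi\phi$ with $f(\phi):=\tfrac12\Delta\phi-V\phi-\beta|\phi|^2\phi+\delta\Delta(|\phi|^2)\phi$, and check that $f(\phi)=-\,\delta E/\delta\bar\phi$, the negative variational derivative of the energy \eqref{energy}. The chain rule then gives $\frac{d}{dt}E(\phi)=2\,\mathrm{Re}\int_\Omega\frac{\delta E}{\delta\bar\phi}\,\bar\phi_t\,d\bx=-2\,\mathrm{Re}\int_\Omega f(\phi)\,\bar\phi_t\,d\bx$. (Equivalently, avoiding variational-derivative language, one differentiates each of the four terms of $E$ in $t$ using $\frac{d}{dt}|\phi|^2=2\,\mathrm{Re}(\bar\phi\phi_t)$ and integrates the gradient and Laplacian terms by parts; this reproduces the same expression.) Now substitute $f(\phi)=\phi_t-\mu_\phi\phi$ to obtain $\frac{d}{dt}E(\phi)=-2\,\mathrm{Re}\int(\phi_t-\mu_\phi\phi)\bar\phi_t=-2\|\phi_t\|_2^2+\mu_\phi\,\frac{d}{dt}\|\phi\|_2^2=-2\|\phi_t\|_2^2$, the last equality invoking Step 1. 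This is \eqref{CNGF:energy}.

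\textbf{Step 3 (monotonicity).} Integrating \eqref{CNGF:energy} over $[t_1,t_2]$ yields $E(\phi(\cdot,t_2))-E(\phi(\cdot,t_1))=-2\int_{t_1}^{t_2}\|\phi_t(\cdot,s)\|_2^2\,ds\le0$, i.e. $E(\phi(\cdot,t_1))\ge E(\phi(\cdot,t_2))$ for $0\le t_1\le t_2<\infty$.

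\textbf{Main obstacle.} Conceptually this is a routine energy estimate; the only point requiring care is the higher-order interaction term $\delta\Delta(|\phi|^2)\phi$. In Step 1 it must first be paired with $\bar\phi$ into $|\phi|^2\Delta(|\phi|^2)$ before integrating by parts, and in Step 2 one has to verify that the contribution of $\tfrac{\delta}{2}|\nabla(|\phi|^2)|^2$ to $\frac{d}{dt}E$, namely $-2\delta\,\mathrm{Re}\int\Delta(|\phi|^2)\,\bar\phi\phi_t$, matches the corresponding piece of $-2\,\mathrm{Re}\int f(\phi)\bar\phi_t$. One should also flag explicitly that the energy identity is not self-contained — it uses the mass conservation of Step 1 to discard the $\mu_\phi\frac{d}{dt}\|\phi\|_2^2$ term — and that every boundary term that appears is killed by the homogeneous Dirichlet condition on $\phi$.
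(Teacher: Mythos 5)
Your proof is correct and follows exactly the route the paper indicates (and omits for brevity): differentiate $\|\phi\|_2^2$ and $E(\phi)$ in time, substitute \eqref{CNGF1}, integrate by parts with vanishing boundary terms, and use the definition \eqref{CNGF2} of $\mu_\phi$ together with mass conservation to cancel the multiplier contribution. The term-by-term verification of the higher-order interaction piece and the explicit note that \eqref{CNGF:energy} relies on \eqref{CNGF:mass} are exactly the details the paper suppresses, so nothing further is needed.
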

\begin{proof}
\eqref{CNGF:mass} and \eqref{CNGF:energy} can be derived by taking  time derivatives of $\|\phi(\cdot,t)\|^2$ and $E(\phi)$, respectively, and combining \eqref{CNGF1}. 
The details are omitted here for brevity. \qed
\end{proof}

\subsection{A mass conserved and energy diminishing discretization}\label{subsec:CNGF}
In this section, we will present the CNGF-FD scheme, which is the full discretization of the CNGF \eqref{CNGF1}-\eqref{CNGF2} via Crank-Nicolson in time and finite difference in space.  

Due to the fact that the external potential $V(\bx)$ satisfies the confining condition, the ground state decays to zero exponentially fast as $|\bx|\to\infty$ \cite{mgpe-th}. 
Therefore, we can always truncate the problem into a large bounded domain with homogeneous Dirichlet boundary conditions in practical computation.  
For simplicity, only the 1D case, which is defined over an interval $(a,b)$, is considered. Extension to higher dimensions is straightforward for  tensor product grids and  thus omitted here.

Choose a time step $\tau>0$ and denote the time sequence as 
$t_n=n\tau$ for $n\ge0$.  
Take $\Omega = (a, b)$ to be the computational domain and denote the uniformly distributed grid points as
\be\label{notation:x}
x_j := a + jh, \text{ for } j = 0, 1, \dots N, 
\ee
where $h := (b - a)/N$ is the mesh size. 
Let $\phi_j^n$ be the numerical approximation of $\phi(x_j,t_n)$ and $V_j:=V(x_j)$.  
%By the homogenous Dirichlet boundary conditions and the initial conditions, 
%$\phi_0^n=\phi_N^n=0,\quad \phi^0_j=\phi_0(x_j)$. 
 Denote $\Phi^n$ to be the vector solution with component $\phi_j^n$, i.e. 
 \be\label{notation:phi}
 \Phi^n := (\phi_1^n, \dots , \phi_{N-1}^n)^T\in\mathbb{R}^{N-1}.
 \ee  
Define the operators $\delta_x^+$ and $\delta_x^2$ as 
$$\delta_x^+\phi_j^n:=\frac{\phi_{j+1}^n-\phi_j^n}{h}, \quad \delta_{x}^2\phi_j^n:=\frac{\phi_{j+1}^n-2\phi_j^n+\phi_{j-1}^n}{h^2}$$ 
to be the finite difference approximations of  $\partial_x$ and $\partial_{xx}$, respectively. 

With the notations above, the CNGF scheme \eqref{CNGF1}-\eqref{CNGF2}  is discretized, 
for $j=1,2,\dots,N-1$ and $n\ge0$, as   
\begin{align}
&\frac{\phi^{n+1}_j-\phi^n_j}{\tau}=\frac{1}{2}\delta_x^2\bphi^{n}_j-V_j\bphi^{n}_j-\beta\brho^{n}_j\bphi^{n}_j+\delta\delta_x^2\brho^{n}_j\bphi^{n}_j+\bar\mu^{n}\bphi^{n}_j, \label{CNGF-FD}\\
& \phi^0_j=\phi_0(x_j), \qquad \phi_0^n=\phi_N^n=0
\end{align}
where $\bphi^{n}_j:=(\phi^n_j+\phi^{n+1}_j)/2$, $\rho^n_j:=|\phi^n_j|^2$, $\brho^{n}_j:=(\rho^n_j+\rho^{n+1}_j)/2$ and 
\be\label{CNGF-FD2}
\bar\mu^{n}:=\frac{\sum_{j=0}^{N-1}\left[\frac{1}{2}|\delta_x^+\bphi^{n}_j|^2+V_j|\bphi^{n}_j|^2+\beta\brho^{n}_j|\bphi^{n}_j|^2+\delta\delta_x^+\brho^{n}_j\delta_x^+(|\bphi^{n}_j|^2)\right]}{\sum_{j=0}^{N}|\bphi^{n}_j|^2}.
\ee
The  discretized mass and the discretized energy at $t=t_n$ can be computed, respectively, as 
$
\|\Phi^n\|_h^2:=h\sum_{j=1}^{N-1}|\phi^n_j|^2
$
and 
\be\label{E-FD}
E_h(\Phi^n):=h\sum_{j=1}^{N-1}\left[\frac{1}{2}|\delta_x^+\phi^{n}_j|^2+V(x_j)|\phi^{n}_j|^2+\frac{\beta}{2}|\phi^{n}_j|^4+\frac{\delta}{2}|\delta_x^+(|\phi^{n}_j|^2)|^2\right].
\ee

Similar to the CNGF \eqref{CNGF1}-\eqref{CNGF2},  the CNGF-FD scheme \eqref{CNGF-FD}-\eqref{CNGF-FD2} is mass conservation and energy diminishing as well, as shown in Lemma \ref{lem:CNGF}. 
\begin{lemma}\label{lem:CNGF}
The CNGF-FD scheme \eqref{CNGF-FD}-\eqref{CNGF-FD2}  is normalization conservation and energy diminishing, i.e. 
\be\label{lem:CNGF_prop}
\|\Phi^{n+1}\|_h^2=\|\Phi^n\|_h^2,\quad E_h(\Phi^{n+1})\le E_h(\Phi^n), \text{ for all }  n\ge0. 
\ee
\end{lemma}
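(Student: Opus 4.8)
The plan is to mimic, at the discrete level, the two computations that establish Lemma \ref{lem:CNGF-cts}, exploiting the fact that the Crank--Nicolson discretization \eqref{CNGF-FD} was deliberately chosen so that the midpoint quantities $\bphi^n_j$ and $\brho^n_j$ appear on the right-hand side. First I would establish mass conservation. Multiply \eqref{CNGF-FD} by $h\,\overline{\bphi^n_j}$ (the complex conjugate of the midpoint value), sum over $j=1,\dots,N-1$, and take the real part. On the left, $\mathrm{Re}\sum_j (\phi^{n+1}_j-\phi^n_j)\overline{(\phi^{n+1}_j+\phi^n_j)}/2 = \tfrac12(\|\Phi^{n+1}\|_h^2-\|\Phi^n\|_h^2)$ after the standard $(a-b)\overline{(a+b)}$ identity. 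On the right, every term is of the form $(\text{real})\cdot|\bphi^n_j|^2$ except the Laplacian term; using summation by parts (discrete Green's identity) with the homogeneous boundary values $\phi^n_0=\phi^n_N=0$, the term $\tfrac12\sum_j \delta_x^2\bphi^n_j\,\overline{\bphi^n_j}$ has real part $-\tfrac12\sum_j|\delta_x^+\bphi^n_j|^2$, which is real. Hence the entire right-hand side, after taking the real part and dividing by the appropriate factors, matches exactly $\tau\,\bar\mu^n\sum_{j}|\bphi^n_j|^2$ minus the sum of the other (manifestly real) terms — but by the \emph{definition} \eqref{CNGF-FD2} of $\bar\mu^n$, this combination is identically zero. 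Therefore $\|\Phi^{n+1}\|_h^2=\|\Phi^n\|_h^2$.

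Next I would prove the energy-diminishing property, which is the more delicate half. The natural route is to test \eqref{CNGF-FD} against $h\,\overline{(\phi^{n+1}_j-\phi^n_j)}$, sum over $j$, and take real parts: the left-hand side gives $\tfrac1\tau\|\Phi^{n+1}-\Phi^n\|_h^2\ge 0$. The task is then to show that the real part of the right-hand side equals $E_h(\Phi^n)-E_h(\Phi^{n+1})$ (or is bounded below by it). For the kinetic and potential terms this is the usual telescoping: $\mathrm{Re}\sum_j(\tfrac12\delta_x^2\bphi^n_j - V_j\bphi^n_j)\overline{(\phi^{n+1}_j-\phi^n_j)}$, after summation by parts on the Laplacian, produces $-\tfrac12(\|\delta_x^+\Phi^{n+1}\|_h^2-\|\delta_x^+\Phi^n\|_h^2)/h \cdot h - \cdots$ — i.e., exactly the difference of the first two terms of $E_h$ in \eqref{E-FD}, because the midpoint times a difference is a perfect telescoping increment for a quadratic form. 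The $\bar\mu^n$ term contributes $\bar\mu^n\,\mathrm{Re}\sum_j\bphi^n_j\overline{(\phi^{n+1}_j-\phi^n_j)} = \tfrac{\bar\mu^n}{2}(\|\Phi^{n+1}\|_h^2-\|\Phi^n\|_h^2)=0$ by the mass conservation just proved, so it drops out cleanly.

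The main obstacle is the pair of nonlinear terms $-\beta\brho^n_j\bphi^n_j$ and $+\delta\,\delta_x^2\brho^n_j\,\bphi^n_j$. For the cubic term one wants $\mathrm{Re}\sum_j \brho^n_j\bphi^n_j\,\overline{(\phi^{n+1}_j-\phi^n_j)}$ to telescope to $\tfrac14(\|\Phi^{n+1}\|_{4,h}^4-\|\Phi^n\|_{4,h}^4)$-type increments; this works because $\brho^n_j=(\rho^{n+1}_j+\rho^n_j)/2$ and $\mathrm{Re}\,\bphi^n_j\overline{(\phi^{n+1}_j-\phi^n_j)}=\tfrac12(\rho^{n+1}_j-\rho^n_j)$, so the product is $\tfrac14(\rho^{n+1}_j+\rho^n_j)(\rho^{n+1}_j-\rho^n_j)=\tfrac14((\rho^{n+1}_j)^2-(\rho^n_j)^2)$ — an exact telescope, matching the $\tfrac\beta2|\phi|^4$ term in $E_h$. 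The HOI term is handled the same way: summation by parts moves $\delta_x^2$ onto $\bphi^n_j\overline{(\phi^{n+1}_j-\phi^n_j)}$, whose real part is again $\tfrac12(\rho^{n+1}_j-\rho^n_j)$ after a short manipulation, so one gets $-\delta\sum_j \delta_x^+\brho^n_j\,\delta_x^+(\tfrac12(\rho^{n+1}_j-\rho^n_j))$, and since $\brho^n=(\rho^{n+1}+\rho^n)/2$ this is $-\tfrac\delta4\sum_j(|\delta_x^+\rho^{n+1}_j|^2-|\delta_x^+\rho^n_j|^2)$, exactly the difference of the last term in \eqref{E-FD}. Collecting everything, $\tfrac1\tau\|\Phi^{n+1}-\Phi^n\|_h^2 = E_h(\Phi^n)-E_h(\Phi^{n+1})$, so $E_h(\Phi^{n+1})\le E_h(\Phi^n)$, completing the proof; I would double-check the sign bookkeeping in the HOI summation-by-parts and the boundary-term vanishing (which uses $\phi^n_0=\phi^n_N=0$ and hence $\rho^n_0=\rho^n_N=0$) as the one place an error could hide. \qed
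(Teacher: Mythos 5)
Your proof follows essentially the same route as the paper's: testing \eqref{CNGF-FD} against the midpoint $\bphi^n_j$ gives mass conservation via the definition \eqref{CNGF-FD2} of $\bar\mu^n$, and testing against the increment $\phi^{n+1}_j-\phi^n_j$ telescopes each energy term exactly, with the $\bar\mu^n$ contribution vanishing by the first step. The only (harmless) slip is a factor of $2$: each telescoped term yields \emph{half} of the corresponding energy difference, so the final identity is $\frac{1}{\tau}\|\Phi^{n+1}-\Phi^n\|_h^2=\frac12\left[E_h(\Phi^n)-E_h(\Phi^{n+1})\right]$ as in the paper, which still gives $E_h(\Phi^{n+1})\le E_h(\Phi^n)$.
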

\begin{proof}
Multiplying $\phi^{n+1/2}_j$ on both sides of \eqref{CNGF-FD} and summing over $j=1,\dots,N-1$, we have 
\be\nonumber
\frac{\|\Phi^{n+1}\|_h^2-\|\Phi^n\|_h^2}{2\tau}=0, 
\ee
which implies the mass conservation. 
 Similarly, multiplying both sides of \eqref{CNGF-FD} by $(\phi_j^{n+1}-\phi_j^n)$  and summing all together, we will get
\begin{align}
\frac{\|\Phi^{n+1}-\Phi^n\|^2_h}{\tau}&=\frac{1}{2}\left[E_h(\Phi^n)-E_h(\Phi^{n+1})\right]+\mu^{n+1/2}\frac{\|\Phi^{n+1}\|_h^2-\|\Phi^n\|_h^2}{2}\nonumber\\
&=\frac{1}{2}\left[E_h(\Phi^n)-E_h(\Phi^{n+1})\right],\nonumber
\end{align}
where the last equality holds true because of the fact $\|\Phi^{n+1}\|_h^2=\|\Phi^n\|_h^2$. 
It follows that $E_h(\Phi^n)\ge E_h(\Phi^{n+1})$. \qed
\end{proof}

Lemma \ref{lem:CNGF} indicates that, theoretically speaking, the CNGF-FD scheme \eqref{CNGF-FD} conserves mass and diminishes energy no matter the choice of time step $\tau$, and is thus unconditionally stable. 
However, the performance of the method relies on the proper choice of a nonlinear solver. 
Without a proper nonlinear solver, we may  lose the good properties in Lemma \ref{lem:CNGF}. 
For example, \eqref{CNGF-FD} can be solved using the Gauss-Seidel method as  follows.
\be\label{CNGF-FD-GS}
\frac{\phi^{n+1,m+1}_j-\phi^n_j}{\tau}=\left(\frac{1}{2}\delta_x^2-V_j-\beta\rho^{*,m}_j\right)\phi^{*,m+1}_j+(\delta\delta_x^2\rho^{*,m}_j+\tilde\mu^{m})\phi^{*,m}_j,
\ee
where $\phi^{n+1,0}_j:=\phi^n_j$, $\phi^{*,m}_j:=(\phi^n_j+\phi^{n+1,m}_j)/2$,  $\rho^{n+1,m}_j:=|\phi^{n+1,m}_j|^2$, $\rho^{*,m}_j:=(\rho^n_j+\rho^{n+1,m}_j)/2$  and  $\tilde\mu^{m}:=E_h(\Phi^{*,m})/\|\Phi^{*,m}\|_h^2$. 
And then $\Phi^{n+1}$ is  computed as
\be
\Phi^{n+1}:=\lim_{m\to\infty}\Phi^{n+1,m}.
\ee 
Equation \eqref{CNGF-FD-GS} implies that a linear system needs to be solved to get $\Phi^{n+1,m+1}$ and it can be shown that the solution always exists and is unique. However, it is not guaranteed that $\Phi^{n+1,m}$ has a limit as $m\to\infty$, which indicates that there should be some  restrictions on time step $\tau$ and mesh size $h$. 
Such restrictions will be shown in Fig. \ref{fig:stability} in Section \ref{sec:numeric} later. 
It is worth noticing that the restrictions are because of the specific nonlinear solver chosen rather than the  CNGF-FD scheme.

\subsection{Backward Euler finite difference discretization (BEFD)}\label{subsec:BEFD}
In this section, we will introduce the discrete normalized gradient flow and its detailed discretizations, especially the one with backward Euler in time and finite difference in space.

Despite the good properties shown in Lemma \ref{lem:CNGF}, the CNGF-FD scheme \eqref{CNGF-FD} is usually not the best choice  in practical computation since a complicated nonlinear equation needs to be solved for each time step, which might be time consuming and even cause severe stability issues. 
Modifications are needed to get schemes which are both efficient and stable. 
One way to due with the problem is to separate the steepest descent step and the normalization step in \eqref{CNGF1}, which gives us the discrete normalized gradient flow as follows. 
\begin{align}
&\phi_t=\frac{1}{2}\Delta\phi-V(\bx)\phi-\beta|\phi|^2\phi+\delta\Delta(|\phi|^2)\phi, \quad\bx\in\Omega, \,t_n<t<t_{n+1},\label{GF}\\
&\phi(\bx,t_{n+1})=\frac{\phi(\bx,t_{n+1}^+)}{\|\phi(\bx,t_{n+1}^+\|},\label{CNFD-normalization}\\
&\phi(\bx,t)|_{\partial \Omega}=0, \quad \phi(\bx,0)=\phi_0(\bx) \text{ with } \|\phi_0\|=1.
\end{align}
To update from $t=t_n$ to $t_{n+1}$, we first solve \eqref{GF} to get $\phi(\bx,t_{n+1}^+)$ and then get $\phi(\bx,t_{n+1})$ by the normalization step \eqref{CNFD-normalization}.

\begin{remark}
As shown in \cite{Wz1}, the solution of \eqref{GF}-\eqref{CNFD-normalization} may not preserve the energy diminishing property and the limiting solution might differ from the ground state. 
However, such difference is usually not obvious  and can be diminished by choosing a smaller time step. 
\end{remark}

Discretizing \eqref{GF}  via Crank-Nicolson (C-N)  in time and finite-difference (FD)  in space, we will get the CNFD scheme.
However, such scheme is far from satisfactory. The CNFD scheme is still nonlinear and, what makes things worse, there are high restrictions on time step and mesh size. 
As shown in Theorem 3.1 in \cite{Wz1}, even for the linear case where $\beta=\delta=0$,  we need to take $\tau=\mathcal{O}(h^2)$ to diminish energy. 
 
 A better choice would be the BEFD scheme, where \eqref{GF} is discretized via backward Euler in time and the nonlinear terms on the right-hand side are treated semi-implicitly. 
The idea comes from the success of the scheme for GPE where $\delta=0$ and $\beta>0$ \cite{Wz1}.
For the GPE case with $\beta>0$,  the nonlinear term $\beta|\phi|^2\phi$ is treated as $\beta|\phi^{n}|^2\phi^{n+1}$ when updating from $t=t_n$ to $t_{n+1}$, and the linear terms are treated implicitly. 
It is shown that the BEFD scheme is energy diminishing for any $\tau>0$ \cite{Wz1} when $V(\bx)\ge0$ and $\beta=\delta=0$, which implies a great advantage of the BEFD scheme over the CNFD scheme. 
When $\beta<0$, we need to approximate $\beta|\phi|^2\phi$ fully explicitly as $\beta|\phi^{n}|^2\phi^{n}$ to avoid possible instability issues. 

Following a similar idea,  we can construct the BEFD scheme for  MGPE.  
We adopt the same notations as in Section \ref{subsec:CNGF} and, for simplicity, only the 1D case where $\beta\ge0$ is considered.   
There are several ways to treat the $\delta$-nonlinear term semi-implicitly. 
Below is one possibility.   
For all $j=1,2,\dots,N-1$ and $n\ge0$, we have the BEFD scheme constructed as 
\begin{align}
&\frac{\tphi^{n+1}_j-\phi^n_j}{\tau}=\frac{1}{2}\delta_x^2\tphi^{n+1}_j-V_j\tphi^{n+1}_j-\beta|\phi^n_j|^2\tphi^{n+1}_j+\delta\delta_x^2(|\phi^n_j|^2)\tphi^{n+1}_j, \label{BEFD1}\\
&\Phi^{n+1}=\frac{\tPhi^{n+1}}{\|\tPhi^{n+1}\|_h}, \qquad \phi^{n+1}_0=\phi^{n+1}_N=0, \qquad \phi^0_j=\phi_0(x_j)\label{BEFD1_end}.
\end{align}
It is worth noticing that, with the BEFD scheme \eqref{BEFD1}-\eqref{BEFD1_end}, a linear equation of form $A^{(n)}\tPhi^{n+1}=F(\Phi^n)$, instead of a nonlinear one, needs to be solved for each time step. Easy to check that  $F(\Phi^n)=\Phi^n/\tau$ and 
\be\label{def:A}
A^{(n)}=I/\tau-D+V+\beta|\Phi^n|^2-\delta \delta_x^2(|\Phi^n|^2)
\ee
with $D:=(d_{jk})_{(N-1)\times(N-1)}$ defined as the finite difference approximation of $-\partial_{xx}/2$ where  
\be\label{FD-Delta}
d_{jk}:=\frac{1}{2h^2}
\begin{cases}
2,\quad j=k,\\
-1,\quad |j-k|=1,\\
0,\quad \text{otherwise}.
\end{cases}
\ee
and 
$V, \beta|\Phi^n|^2, \delta \delta_x^2(|\Phi^n|^2)$ are diagonal matrices whose diagonals are the corresponding vectors. 
%It is worth noting that the matrix $A^{(n)}$ is a sparse M-matrix, which 
%guarantees the existence and uniqueness of the solution and 
%enables us to solve the equation efficiently. 

The  BEFD scheme \eqref{BEFD1}-\eqref{BEFD1_end} is conditionally stable. In most cases, we need  $\tau\lesssim h^2$ as indicated in the following lemma. 
%A simple analysis gives the following restriction on time steps. 
\begin{lemma}\label{lem:BEFD1}
When $\beta\ge0$ and $V\ge0$, one sufficient condition for the BEFD scheme \eqref{BEFD1}-\eqref{BEFD1_end} to be solvable is $\tau\lesssim h^2$. 
\end{lemma}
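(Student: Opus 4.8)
The plan is to show that the linear system $A^{(n)}\tPhi^{n+1}=F(\Phi^n)$ is solvable by establishing that the coefficient matrix $A^{(n)}$ defined in \eqref{def:A} is nonsingular, and to see under what condition on $\tau$ and $h$ this can be guaranteed in general (i.e., for an arbitrary iterate $\Phi^n$). First I would write $A^{(n)}=\frac1\tau I + \left(-D+V+\beta|\Phi^n|^2\right) - \delta\,\mathrm{diag}\bigl(\delta_x^2(|\Phi^n|^2)\bigr)$. The first three grouped terms form a real symmetric matrix that is positive semidefinite: $-D$ is the symmetric positive semidefinite discrete Laplacian with homogeneous Dirichlet conditions (eigenvalues in $[0,\,2/h^2]$ roughly), and $V$, $\beta|\Phi^n|^2$ are diagonal with nonnegative entries since $V\ge0$ and $\beta\ge0$. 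Hence $\frac1\tau I -D+V+\beta|\Phi^n|^2$ is symmetric with all eigenvalues $\ge 1/\tau>0$, so it is invertible with its smallest eigenvalue bounded below by $1/\tau$.

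The only potentially problematic term is the diagonal perturbation $-\delta\,\mathrm{diag}\bigl(\delta_x^2(|\Phi^n|^2)\bigr)$, whose entries can be negative (this is exactly the attractive part of the HOI term that the paper later proposes to split off). I would bound its spectral norm: by the definition of $\delta_x^2$, each diagonal entry satisfies $|\delta_x^2(|\phi_j^n|^2)|\le \frac{4}{h^2}\max_k|\phi_k^n|^2\le \frac{4}{h^2}\|\Phi^n\|_\infty^2$, and since $\|\Phi^n\|_h=1$ after normalization one has the crude bound $\|\Phi^n\|_\infty^2\le \|\Phi^n\|_h^2/h = 1/h$, so the perturbation has norm $\lesssim \delta/h^3$; more sharply, using only $\|\Phi^n\|_\infty\le C$ one gets norm $\lesssim \delta/h^2$. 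Then $A^{(n)}$ is invertible provided $\frac1\tau$ exceeds this norm, i.e. provided $\tau \lesssim h^2$ (with the constant depending on $\delta$ and on a bound for $\|\Phi^n\|_\infty$). This is a standard Gershgorin/perturbation-of-invertibility argument: a symmetric positive definite matrix with least eigenvalue $1/\tau$ stays invertible under any perturbation of operator norm $<1/\tau$.

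To make this rigorous and to get the clean statement $\tau\lesssim h^2$, the key step I expect to require the most care is controlling $\|\delta_x^2(|\Phi^n|^2)\|_\infty$ \emph{uniformly in $n$}. Since $\Phi^n$ is produced by the scheme rather than given, one cannot simply assume an $O(1)$ bound on $\|\Phi^n\|_\infty$; the honest route is either (a) to phrase the condition as $\tau \le C h^2$ with $C$ depending on $\max_n\|\Phi^n\|_\infty$ and then remark that in practice this maximum stays bounded, or (b) to use the a priori bound $\|\Phi^n\|_\infty\le h^{-1/2}$ coming from $\|\Phi^n\|_h=1$, which yields the weaker but unconditional-in-data sufficient condition $\tau\lesssim h^3$ — explaining the hedge ``in most cases'' in the statement. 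I would present the argument under assumption (a), since that matches the informal ``$\tau\lesssim h^2$'' claim, noting explicitly where the bound on $\|\Phi^n\|_\infty$ enters. Everything else — symmetry, positive semidefiniteness of $-D+V+\beta|\Phi^n|^2$, the eigenvalue shift by $1/\tau$, and the invertibility-under-small-perturbation lemma — is routine linear algebra and needs only to be assembled in order.
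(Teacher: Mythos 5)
Your proposal is correct and takes essentially the same route as the paper: both isolate the possibly negative diagonal term $-\delta\,\delta_x^2(|\Phi^n|^2)$, bound it by $\mathcal{O}(1/h^2)$ using the $l_\infty$-boundedness of $\Phi^n$ (which the paper simply cites from the theory of the ground state), and require $1/\tau$ to dominate it. The paper phrases this as the entrywise condition $1/\tau-\delta\,\delta_x^2(|\phi_j^n|^2)\ge 0$ rather than your operator-norm perturbation bound, but the mechanism and the resulting sufficient condition $\tau\lesssim h^2$ are identical.
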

\begin{proof}
 One sufficient condition for the matrx $A^{(n)}$, with general choices of $V\ge0$ and $\beta\ge0$, to be nonsingular is
$\min\{1/\tau-\delta \delta_x^2(|\phi^n_j|^2)\}\ge0$, i.e. 
\be\label{BEFD1:cond}
\tau\le\frac{1}{\delta\delta_x^2(|\phi^n_j|^2)}, \text{ for } j=1,\dots,N-1.
\ee  
Noticing that $\delta_x^2 (|\Phi^n|^2)=-2D|\Phi^n|^2$, where the largest eigenvalue of $D$ is of order $\mathcal{O}(1/h^2)$ and $|\Phi^n|^2$ is bounded in $l_{\infty}$-norm \cite{mgpe-th},  we get $\delta_x^2(|\phi^n_j|^2)\lesssim 1/h^2$,which implies the sufficient condition $\tau\lesssim h^2$. \qed
\end{proof}

%The same restriction on time step $\tau$ will be derived in a similar way if the term $\delta\Delta(|\phi|^2)\phi$ is treated semi-implicitly in other ways. 
For the BEFD scheme \eqref{BEFD1}-\eqref{BEFD1_end}, the restriction on time step $\tau$ is mainly because of the possible negative sign of $\delta_x^2 (|\Phi^n|^2)$. 
To avoid the restriction, one possible way is to approximate the term $\delta\Delta(|\phi|^2)\phi$ fully explicitly. 
In this way, we get a different BEFD scheme as follows.
\begin{align}
&\frac{\tphi^{n+1}_{j}-\phi^n_j}{\tau}=\frac{1}{2}\delta_x^2\tphi^{n+1}_{j}-V(x_j)\tphi^{n+1}_{j}-\beta|\phi^n_j|^2\tphi^{n+1}_{j}+\delta\delta_x^2(|\phi^n_j|^2)\phi_j^n, \label{BEFD2}\\
&\Phi^{n+1}=\frac{\tPhi^{n+1}}{\|\tPhi^{n+1}\|_h}, \qquad \phi^{n+1}_0=\phi^{n+1}_N=0, \qquad \phi^0_j=\phi_0(x_j),\label{BEFD2_end}
\end{align}
where $j=1,2,\cdots,N-1$ and $n\ge0$.

The equation \eqref{BEFD2} can be reformulated in the matrix form $\tA^{(n)}\tPhi^{n+1}=\tF(\Phi^n)$, where $\tF(\Phi^n)=\Phi^n/\tau+\delta \delta_x^2(|\Phi^n|^2)\Phi^n$ and 
\be\label{def:tA}
\tA^{(n)}=I/\tau-D+V+\beta|\Phi^n|^2.
\ee
It is worth noting that the matrix $\tA^{(n)}$ is an M-matrix, which guarantees the solvability in each step. 
However, there are still restrictions on time steps for $\Phi^n$ to converge to the correct ground state $\Phi_g$. 
When $\beta=0$, the following lemma shows that the scheme \eqref{BEFD2} may be not  energy diminishing if we don't take time step satisfying $\tau\lesssim h^2$. 
\begin{lemma}\label{lem:BEFD2}
When $\beta=0$ and $V\ge0$, one sufficient condition for 
%the BEFD scheme \eqref{BEFD2} to be energy diminishing 
$E_h(\tPhi^{n+1})\le E_h(\Phi^n)$, for all $n\ge0$,  is $\tau\lesssim h^2$. 
\end{lemma}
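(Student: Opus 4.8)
\emph{Proof proposal.} The plan is to adapt the variational characterization of the backward-Euler step used in \cite{Wz1} for the linear case: I will identify $\tPhi^{n+1}$ as the minimizer of a strictly convex functional obtained from $\tfrac12E_h$ by \emph{linearizing} its $\delta$-part at $\Phi^n$, compare the value of this functional at $\tPhi^{n+1}$ and at $\Phi^n$, and then bound the Taylor remainder of the $\delta$-part; the whole point of the step-size restriction is that this remainder must be absorbed by the dissipation $\tfrac1\tau\|\tPhi^{n+1}-\Phi^n\|_h^2$. Write $\langle\cdot,\cdot\rangle_h$ for the inner product inducing $\|\cdot\|_h$, let $A_0$ be the symmetric matrix of $-\tfrac12\delta_x^2+V$ (positive semidefinite since $V\ge0$), let $L:=-\delta_x^2$ be the discrete Dirichlet Laplacian, and set $\rho^n_j:=|\phi^n_j|^2$, $(G^n)_j:=\delta\,\delta_x^2(\rho^n_j)\,\phi^n_j$. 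A summation by parts using the Dirichlet conditions gives, for $\beta=0$,
\be
E_h(\Phi)=\langle A_0\Phi,\Phi\rangle_h+\tfrac\delta2 D_+(\Phi),\qquad D_+(\Phi):=\langle L|\Phi|^2,|\Phi|^2\rangle_h\ge0,
\ee
and a direct computation shows $G^n=-\nabla_h\!\big(\tfrac\delta4 D_+\big)(\Phi^n)$, i.e. $G^n$ is (minus) the gradient of the $\delta$-part of $\tfrac12E_h$ at $\Phi^n$.

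With $\beta=0$, scheme \eqref{BEFD2} reads in matrix form $\big(\tfrac1\tau I+A_0\big)\tPhi^{n+1}=\tfrac1\tau\Phi^n+G^n$, so $\tPhi^{n+1}$ is the unique minimizer of the strictly convex functional
\be
\mathcal F(\Phi):=\tfrac1{2\tau}\|\Phi-\Phi^n\|_h^2+\tfrac12\langle A_0\Phi,\Phi\rangle_h+\tfrac\delta4 D_+(\Phi^n)-\langle G^n,\Phi-\Phi^n\rangle_h,
\ee
whose last two terms form the affine tangent to $\tfrac\delta4 D_+$ at $\Phi^n$. One checks $\mathcal F(\Phi^n)=\tfrac12E_h(\Phi^n)$ and, with $e:=\tPhi^{n+1}-\Phi^n$ and the Taylor remainder $R^n:=\tfrac\delta4\big[D_+(\tPhi^{n+1})-D_+(\Phi^n)\big]+\langle G^n,e\rangle_h$, that $\mathcal F(\tPhi^{n+1})=\tfrac1{2\tau}\|e\|_h^2+\tfrac12E_h(\tPhi^{n+1})-R^n$. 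Hence $\mathcal F(\tPhi^{n+1})\le\mathcal F(\Phi^n)$ rearranges into
\be
E_h(\tPhi^{n+1})\le E_h(\Phi^n)-\tfrac1\tau\|e\|_h^2+2R^n ,
\ee
and it remains to prove $2R^n\le\tfrac1\tau\|e\|_h^2$.

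The heart of the matter is estimating $R^n$. Since $D_+$ is the quadratic form $\langle L\cdot,\cdot\rangle_h$ evaluated at $|\Phi|^2$, writing $|\tPhi^{n+1}|^2=\rho^n+w$ with $w:=|\tPhi^{n+1}|^2-\rho^n$ and using $G^n=-\nabla_h(\tfrac\delta4 D_+)(\Phi^n)$ yields, after cancellation, the exact identity $R^n=\tfrac\delta4\big[\langle Lw,w\rangle_h+2\langle L\rho^n,|e|^2\rangle_h\big]$, where $|e|^2$ is the vector of the $|e_j|^2$. I would then invoke the uniform $\ell^\infty$ bound on the iterates from \cite{mgpe-th}, $\|\Phi^n\|_\infty\le M$ for all $n$; since $\tau\lesssim h^2$, the M-matrix bound $\|(\tA^{(n)})^{-1}\|_\infty\le\tau$ together with $\|\delta_x^2\rho^n\|_\infty\lesssim M^2/h^2$ also gives $\|\tPhi^{n+1}\|_\infty\lesssim M$. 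Consequently $|w_j|\le\big(|\tphi^{n+1}_j|+|\phi^n_j|\big)|e_j|\lesssim M|e_j|$, so $\|w\|_h\lesssim M\|e\|_h$, and with $\|L\|_{2\to2}\lesssim h^{-2}$, $\|L\|_{\infty\to\infty}\lesssim h^{-2}$, $\|\rho^n\|_\infty\le M^2$ we get $\langle Lw,w\rangle_h\lesssim M^2 h^{-2}\|e\|_h^2$ and $|\langle L\rho^n,|e|^2\rangle_h|\lesssim M^2 h^{-2}\|e\|_h^2$, hence $R^n\lesssim\delta M^2 h^{-2}\|e\|_h^2$. Therefore $2R^n\le\tfrac1\tau\|e\|_h^2$ as soon as $\tau\le c\,h^2$ for a suitable $c=c(\delta,M)>0$, which is the claimed condition; if $\delta=0$ there is nothing to prove.

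The main obstacle is the remainder estimate, and within it the use of a uniform-in-$n$ $\ell^\infty$ bound on the iterates: the cubic and quartic contributions hidden in $\langle Lw,w\rangle_h$ cannot be controlled by $\|e\|_h$ alone, since a discrete inverse inequality would cost a factor $h^{-1/2}$ per extra power of $e$, so one genuinely needs $\ell^\infty$ control; and it is precisely the operator norm $\|L\|\sim h^{-2}$ of the twice-differentiated, explicitly treated term that produces the factor $h^{-2}$ and hence the restriction $\tau\lesssim h^2$. (As in \cite{Wz1}, the subsequent normalization of $\tPhi^{n+1}$ plays no role in, and does not affect, this statement.)
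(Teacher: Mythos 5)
Your proof is correct and is essentially the paper's argument in variational packaging: minimizing your $\mathcal F$ is equivalent to the paper's step of multiplying \eqref{BEFD2} by $\tphi^{n+1}_j-\phi^n_j$ and summing, and your remainder $2R^n=\tfrac{\delta}{2}\langle Lw,w\rangle_h+\delta\langle L\rho^n,|e|^2\rangle_h$ is exactly the paper's $\tfrac{1}{2}\|\delta_x\delta_t(|\Phi^n|^2)\|^2-(\delta_x^2(|\Phi^n|^2),(\delta_t\Phi^n)^2)$ term, estimated by the same $\ell^\infty$ bound on the iterates plus an inverse inequality to produce the factor $h^{-2}$ and hence $\tau\lesssim h^2$. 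The only differences are cosmetic (you are somewhat more careful in justifying the uniform bound on $\|\tPhi^{n+1}\|_\infty$ via the M-matrix property, where the paper simply cites boundedness).
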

\begin{proof}
For simplicity, denote $\delta_t\Phi^n=\tPhi^{n+1}-\Phi^n$ and $\|\Phi\|_V=(V\Phi,\Phi)$, where $(\cdot,\cdot)$ denotes the inner product of two vectors. 
Multiplying both sides of \eqref{BEFD2} by $\tphi^{n+1}_{j}-\phi^n_j$ and summing over $j$, we get 
\begin{align}
&\frac{2\|\delta_t\Phi^n\|^2}{\tau}=(\delta_x^2\tPhi^{n+1},\delta_t\Phi^n)-2(V\tPhi^{n+1},\delta_t\Phi^n)+2\delta(\delta_x^2(|\Phi^n|^2)\Phi^n,\delta_t\Phi^n)\nonumber\\
&=\frac{1}{2}(\|\delta_x\Phi^n\|^2-\|\delta_x\tPhi^{n+1}\|^2-\|\delta_x\delta_t\Phi^{n}\|^2)+(\|\Phi^{n}\|_V-\|\tPhi^{n+1}\|_V-\|\delta_t\Phi^{n}\|_V)\nonumber\\
&+\frac{1}{2}\left[\|\nabla(|\Phi^n|^2)\|^2-\|\nabla(|\tPhi^{n+1}|^2)\|^2+\|\delta_x\delta_t(|\Phi^n|^2)\|^2\right]-\left(\delta_x^2(|\Phi^n|^2),(\delta_t\Phi^n)^2\right)\nonumber\\
&\le E_h(\Phi^n)-E_h(\tPhi^{n+1})+\frac{1}{2}\|\delta_x\delta_t(|\Phi^n|^2)\|^2-\left(\delta_x^2(|\Phi^n|^2),(\delta_t\Phi^n)^2\right).\nonumber
\end{align}
Obviously, one sufficient condition for the energy diminishing is 
\be\label{proof:BEFD2_1}
\frac{2\|\delta_t\Phi^n\|^2}{\tau}\ge\frac{1}{2}\|\delta_x\delta_t(|\Phi^n|^2)\|^2-\left(\delta_x^2(|\Phi^n|^2),(\delta_t\Phi^n)^2\right).
\ee
The boundedness of $\Phi^n$ for arbitrary $n$ \cite{mgpe-th} indicates the boundedness of $|\Phi^n|^2$. 
Furthermore, we have $\|\delta_t|\Phi^n|^2\|_{\infty}\lesssim\|\delta_t\Phi^n\|_{\infty}$. 
Following a similar argument as in the proof of Lemma \ref{lem:BEFD1}, we get 
$\|\delta_x^2(|\Phi^n|^2)\|_{\infty}\lesssim1/h^2$.
As a result, 
\be\label{proof:BEFD2_2}
\left|\left(\delta_x^2(|\Phi^n|^2),(\delta_t\Phi^n)^2\right)\right|\le\|\delta_x^2(|\Phi^n|^2)\|_{\infty}\left(1,(\delta_t\Phi^n)^2\right)\lesssim\|\delta_t\Phi^n\|^2/h^2.
\ee
 Besides, the Bramble-Hilbert lemma and a standard scaling argument gives
 \be\label{proof:BEFD2_3}
 \|\delta_x\delta_t(|\Phi^n|^2)\|^2\lesssim \frac{ \|\delta_t(|\Phi^n|^2)\|^2}{h^2}.
 \ee
 Combing \eqref{proof:BEFD2_1}, \eqref{proof:BEFD2_2} and \eqref{proof:BEFD2_3}, we get the sufficient condition $\tau\lesssim h^2$. \qed
\end{proof}

Lemma \ref{lem:BEFD1} and  \ref{lem:BEFD2} indicate the possible strict restrictions on time step $\tau$ no matter whether the term $\delta\Delta(|\phi|^2)\phi$ is treated semi-implicitly or fully explicitly. 
Later in Section \ref{sec:numeric}, we will show numerically that such restrictions are necessary in many cases for the solution to converge to the desired ground state. 
The high restrictions on time step imply the BEFD schemes \eqref{BEFD1}-\eqref{BEFD1_end} and    \eqref{BEFD2}-\eqref{BEFD2_end} proposed in this section are not satisfactory. 
The main reason is that we didn't treat the nonlinear term  $\delta\Delta(|\phi|^2)\phi$ properly. 
In the next section, we will introduce a new scheme, named the backward Euler finite difference scheme with splitting (BEFD-splitting), which splits the term  $\delta\Delta(|\phi|^2)\phi$ into two parts with each part dealt with separately. 
It will be shown numerically that the new method will significantly improve the scheme in the sense that a much larger time step can be adopted for the scheme which is almost unrelated to the mesh size we choose.

\section{Backward Euler discretization with attractive-repulsive splitting}\label{sec:splitting}

In this section, we introduce a new BEFD scheme which can be numerically proven to be much more efficient and stable than the BEFD schemes \eqref{BEFD1}-\eqref{BEFD1_end} and \eqref{BEFD2}-\eqref{BEFD2_end} proposed in the last section. 

\subsection{A new gradient flow with attractive-repulsive splitting}
\label{subsec:gradientflow_splitting}

In this section, we will introduce a new gradient flow with its semi-discretization, which will be used to construct a new BEFD scheme suitable for practical computation.  
The ideal scheme should satisfy the following properties. 
\begin{itemize}
\item
The update for each step should be simple, which is necessary for the scheme to be efficient.  
\item
The scheme should have good stability properties, i.e. the time step shouldn't be highly restricted by the mesh size.
 In this way, we can improve the accuracy of the solution without much extra computational cost.  \end{itemize}
The CNGF-FD scheme \eqref{CNGF-FD} do not satisfy the first requirement since a complicated nonlinear system needs to be solved for each step.  
The BEFD schemes \eqref{BEFD1}-\eqref{BEFD1_end} and \eqref{BEFD2}-\eqref{BEFD2_end}  do not satisfy the second condition since an extremely small time step, which is of order $\mathcal{O}(h^2)$, is needed.  
%In this section, we will modify the BEFD scheme to overcome the stability issue to get a new scheme which would be both efficient and stable.  

The instability issue of the BEFD schemes \eqref{BEFD1}-\eqref{BEFD1_end} and \eqref{BEFD2}-\eqref{BEFD2_end} is due to the nonlinear term $\delta\Delta(|\phi|^2)\phi$ since the schemes work well when $\delta=0$ as shown in \cite{Wz1}.  
One possible reason underlying is that $\Delta(|\phi|^2)$ is much less smooth than $|\phi|^2$.  
A small error in $\phi$ would affect $\Delta(|\phi|^2)$ in a significant way, which indicates that instead of treating the part $\Delta(|\phi|^2)$ as  $\Delta(|\phi^n|^2)$ as a whole, 
we should split it in a proper way and deal with each part separately. 
 A simple computation gives that 
\be\label{eq:split}
\delta\Delta(|\phi|^2)\phi=2\delta|\phi|^2\Delta\phi+2\delta|\nabla\phi|^2\phi.
\ee
Combining the first term on the right hand side of \eqref{eq:split}, i.e.  $2\delta|\phi|^2\Delta\phi$, with the linear term $\frac{1}{2}\Delta\phi$,  we get an equivalent new form of the original gradient flow \eqref{GF}. To be more specific, the new gradient flow reads as follows.
\begin{align}
\phi_t&=\frac{1}{2}\Delta\phi-V(\bx)\phi-\beta|\phi|^2\phi+2\delta|\phi|^2\Delta\phi+2\delta|\nabla\phi|^2\phi,\nonumber\\
&=(\frac{1}{2}+2\delta|\phi|^2)\Delta\phi-V\phi-\beta|\phi|^2\phi+2\delta|\nabla\phi|^2\phi,\quad\bx\in\Omega, \,t\ge0.\label{MGPE_equiv}
\end{align}

The equation \eqref{MGPE_equiv} is equivalent to the original gradient flow \eqref{GF} in the continuous level. 
However, it will lead to different schemes after discretization in time. 
The trick  is that we treat the term $2\delta|\phi|^2\Delta\phi$ semi-implicitly as $2\delta|\phi^n|^2\Delta\phi^{n+1}$ while treat the term $2\delta|\nabla\phi|^2\phi$ fully explicitly as $2\delta|\nabla\phi^n|^2\phi^n$. 
And then we get the following semi-discretized scheme from $t_n$ to $t_{n+1}$
\begin{align}\label{BE-splitting}
&\frac{\tphi^{n+1}-\phi^n}{\tau}=\left[\left(\frac{1}{2}+2\delta\rho^n\right)\Delta-V(\bx)-\beta\rho^n\right]\tphi^{n+1}+2\delta|\nabla\phi^n|^2\phi^n,\\
&\phi^{n+1}=\frac{\tphi^{n+1}}{\|\tphi^{n+1}\|}, \quad\text{ with }\tphi^{n+1}|_{\partial \Omega}= 0\text{ and } \rho^n=|\phi^n|^2, \label{BE-splitting-norm}
\end{align}
where $\bx\in\Omega$, $n\ge0$ and $\beta\ge0$. If $\beta<0$, we need to change $\beta\rho^n\tphi^{n+1}$ to $\beta\rho^n\phi^{n}$.

It is obvious that the new scheme \eqref{BE-splitting} is linear in $\tphi^{n+1}$. Furthermore, the coefficient before $\Delta$, i.e. $\frac{1}{2}+2\delta\rho^n$, is always positive, which guarantees the existence and uniqueness of $\tphi^{n+1}$ for any $n$. 
Besides, by splitting the term $\delta\Delta(|\phi|^2)\phi$ into two parts, we avoid the explicit treatment of $\Delta(|\phi|^2)$ as a whole, which indicates the possibility to get rid of the strict restrictions on $\tau$. All these features imply that the new gradient flow \eqref{BE-splitting} could be used to construct new numerical schemes suitable for practical computation. 

The idea of treating the two terms on the right-hand side of  \eqref{eq:split} separately comes from the convex-concave splitting proposed in, for example, \cite{ShenWWW,WuZZ,Yuille} and so on,
where the term corresponding to a convex energy is treated implicitly while the term corresponding to a concave term is treated explicitly.  
It is proved in \cite{Yuille} that, without discrete normalization, the gradient flow will always be energy diminishing if the convex-concave splitting is applied.
Multiplying both terms in \eqref{eq:split} by $\phi$, integrating over  domain $\Omega$ and doing integration by part, we get  the corresponding energies of the terms $2\delta|\phi|^2\Delta\phi$ and $2\delta|\nabla\phi|^2\phi$ as 
 \be
 H_1=\frac{3\delta}{2}\int_{\Omega}|\nabla(|\phi|^2)|^2\,d\bx\text{ and }
 H_2=-\frac{\delta}{2}\int_{\Omega}|\nabla(|\phi|^2)|^2\,d\bx,
 \ee 
 respectively. 
 Obviously, $H_1$ is convex in $\phi$ while $H_2$ is concave, which indicates the proper discretization of  \eqref{MGPE_equiv} is  \eqref{BE-splitting}. 
 Slightly different from the convex-concave splitting, the term $2\delta|\phi|^2\Delta\phi$ is treated semi-implicitly instead of fully implicitly for the computational efficiency. 
% According to the numerical tests shown later, the scheme \eqref{BE-splitting} is efficient and stable. 

The scheme \eqref{BE-splitting} can be understood from a physics point of view as well by checking the corresponding energy terms.  
 In physics, a positive energy term corresponds to a repulsive energy, which will stablize the system and should be discretized implicitly, while a negative energy term corresponds to an attractive energy, which will, on the contrary, destabilize the system and must be discretized in a fully explicit way. 
%For our problem, we check the energy corresponding to each term in \eqref{BE-splitting}. 
%For example, 
Noticing that the energies corresponding to $2\delta|\phi|^2\Delta\phi$ and $2\delta|\nabla\phi|^2\phi$ are $H_1$ and $H_2$, respectively, and $H_1\ge0$  and  $H_2\le0$ hold  true for any function $\phi$, $2\delta|\phi|^2\Delta\phi$ should be treated implicitly while  $2\delta|\nabla\phi|^2\phi$ must be treated explicitly. 
 The signs of $H_1$ and $H_2$ matter here. 
%In fact, the idea has been used to other problems as well. For example, we need to use the implicit scheme to solve the heat equation when time $t$ increases to $+\infty$ while to use the explicit scheme when time $t$ goes in the opposite direction to $-\infty$ to get conditional stability. 
This idea also explains why we need to change  $\beta|\phi^n|^2\tphi^{n+1}$  to $\beta|\phi^n|^2\phi^{n}$ if $\beta<0$.  
From this point of view, the splitting is  based on the repulsive-attractive splitting of the energy.
As a result, we name our new method to be the \textbf{normalized gradient flow with  repulsive-attractive splitting}.

\subsection{Backward Euler finite difference discretization with splitting}\label{subsec:BEFD-splitting}
In this section, we construct the \textbf{backward Euler finite difference scheme with splitting (BEFD-splitting)}
by discretizing \eqref{BE-splitting} in space via the finite difference.
For simplicity, we adopt the notations as in Section \ref{subsec:CNGF} and consider the 1D problem.  
When $\beta\ge0$,  we have
\begin{align}\label{BEFD-splitting}
&\frac{\tphi^{n+1}_j-\phi^n_j}{\tau}=\left[\left(\frac{1}{2}+2\delta|\phi^n_j|^2\right)\delta_x^2-V_j-\beta|\phi^n_j|^2\right]\tphi^{n+1}_j+2\delta|\delta_x^+\phi^n_j|^2\phi^n_j,\\
&\Phi^{n+1}=\frac{\tPhi^{n+1}}{\|\tPhi^{n+1}\|_h}, \qquad \phi^{n+1}_0=\phi^{n+1}_N=0, \qquad \phi^0_j=\phi_0(x_j).\label{BEFD-splitting-norm}
\end{align}
The equation \eqref{BEFD-splitting}  can be written in matrix form as  
\be\label{BEFD-matrix}
A^{(n)}\tPhi^{n+1}=F(\Phi^n)
\ee
where $A^{(n)}=I+\tau\left[\left(I+4\delta|\Phi^n|^2\right)D+V+\beta|\Phi^n|^2\right]$, $D$ is defined in \eqref{FD-Delta} and
$
F(\Phi^n)=(I+2\delta\tau|\delta_x^+\Phi^n|^2)\Phi^n.
$

Noticing that the matrix $A^{(n)}$ is  tridiagonal and diagonal dominant, the equation \eqref{BEFD-matrix} can be solved efficiently by the  tridiagonal matrix algorithm (TDMA) with the computational cost $\mathcal{O}(N)$. 
The scheme \eqref{BEFD-splitting} can be generalized to multidimensional problems with tensor product grids in a straightforward way. 
In such cases, the matrix $A^{(n)}$ is no longer tridiagonal and TDMA cannot be applied.  
However, it is a \textbf{sparse M-matrix}, which indicates that we can use iterative methods, such as  Gauss-Seidel method,  to get $\tPhi^{n+1}$ efficiently.

It is remarkable that the BEFD-splitting scheme \eqref{BEFD-splitting}-\eqref{BEFD-splitting-norm} is not unconditionally stable. The stability region is difficult to be theoretically analysed. However, as shown by the numerical experiments in Section \ref{subsec:stability}, the scheme is much less restricted than the BEFD schemes \eqref{BEFD1}-\eqref{BEFD1_end} and  \eqref{BEFD2}-\eqref{BEFD2_end}. 
It can be concluded that the new scheme \eqref{BEFD-splitting}-\eqref{BEFD-splitting-norm} satisfies the properties proposed in Section \ref{subsec:gradientflow_splitting} and  has a great advantage over all our previous methods. 

\subsection{Backward Euler pseudo-spectral discretization with splitting}\label{subsec:BESP}
In this section, we construct the \textbf{backward Euler pseudo-spectral scheme with splitting  (BESP-splitting)} 
by discretizing \eqref{BE-splitting} in space via Fourier spectral method. 
Compared to the finite difference method, the spectral method has the advantage of high accuracy for regular domains and smooth solutions. 

For simplicity, 
consider the problem in 1D defined in $\Omega=(a,b)$ and use the same notations as in Section  \ref{subsec:CNGF}. 
Introduce  $\mu_l=\frac{2\pi l}{b-a}$ and 
\be
\hat\Phi:=(\hphi_{-N/2},\hphi_{-N/2+1},\dots,\hphi_{N/2-1})
\ee
 to be the discrete Fourier transform of $\Phi=(\phi_0, \phi_1, \dots , \phi_{N-1})$. It is easy to check that $\hphi_{l}=\hphi_{l+N}$. 
Then \eqref{BEFD-splitting} can discretized as 
\begin{align}\label{BESP-splitting}
&\frac{\tphi^{*}_j-\phi^n_j}{\tau}=\left[\left(\frac{1}{2}+2\delta|\phi^n_j|^2\right)D_{xx}^s-V_j-\beta|\phi^n_j|^2\right]\tphi^{*}_j+2\delta(D_x^s\phi^n_j)^2\phi^n_j,\\
&\Phi^{n+1}=\frac{\tPhi^{*}}{\|\tPhi^{*}\|_h}, \quad\text{ with }\quad \tphi^{*}_0=\tphi^{*}_N=0, \qquad \phi^0_j=\phi_0(x_j).\label{BESP-splitting-norm}
\end{align}
where $D_{xx}^s$ and $D_x^s$ are the pseudo-spectral differential operators approximating $\partial_{xx}$ and $\partial_x$, respectively,  and defined as
\begin{equation}
D_{x}^s \phi_j=\sum_{l=-N/2}^{N/2-1} i\mu_l\hat\phi_l e^{i\mu_l(x_j-a)},
\quad
D_{xx}^s \phi_j=-\sum_{l=-N/2}^{N/2-1} \mu_l^2\hat\phi_l e^{i\mu_l(x_j-a)}.
\end{equation}
To preserve the high accuracy of the spectral method, the discretized energy needs to be computed in the following way \cite{Bao2013}, 
\be\label{energy:sp}
E_h^{SP}(\Phi):=\frac{b-a}{2}\sum_{l=-N/2}^{N/2-1}\left[\mu_l^2\hphi_l^2+\delta \mu_l^2\hrho_l^2\right]+h\sum_{j=1}^{N-1}\left[V|\phi_j|^2+\frac{\beta}{2}|\phi_j|^4\right],
\ee
where $\mu_l$, $\hat\phi_l$ are defined as before and  $\hat\rho:=(\hrho_{-N/2},\hrho_{-N/2+1},\dots,\hrho_{N/2-1})$ is  the discrete Fourier transform of $\rho:=|\Phi|^2$.

The equation \eqref{BESP-splitting} can be written in a matrix form as well. 
Define matrices  
$
W=(w_{jk})\in\mathbb{R}^{N\times N}  \text{ with } w_{jk}=e^{i2\pi jk}$ 
 and 
$
\Xi=\text{diag}(\mu_{k})\in\mathbb{R}^{N\times N}
$
for $ j=0,1,\dots,N-1$ and $k=-N/2, -N/2+1, \dots, N/2-1$. Denote 
\be
D^{\rm{(sp)}}_{x}=iW\Xi W^{-1}\qquad D_{xx}^{\rm{(sp)}}=-W\Xi^2 W^{-1}. 
\ee
Then the equation \eqref{BESP-splitting} can be written as 
\be
A^{(n)}_{\rm{sp}}\tPhi^*=F(\Phi^n), 
\ee 
where $A^{(n)}_{\rm{sp}}=I+\tau\left[-\left(I+4\delta|\Phi^n|^2\right)D^{\rm{(sp)}}_{xx}/2+V+\beta|\Phi^n|^2\right]$ and
$
F(\Phi^n)=(I+2\delta\tau|D^{\rm{(sp)}}_{x}\Phi^n|^2)\Phi^n.
$

Since the fast Fourier transform (FFT) cannot be applied to evaluate $\left[\left(\frac{1}{2}+2\delta|\Phi^n|^2\right)D_{xx}^s\right]^{-1}$, 
the equation \eqref{BESP-splitting} cannot be solved in a direct way or iteratively by the fixed point method as in \cite{Bao2013,Wz1} with the computational cost $\mathcal{O}(N\log N)$. 
One alternative way is through a Krylov subspace method, such as BiCGSTAB \cite{GPELab,Krylov}. 
As shown in \cite{GPELab,Krylov}, the  main computational cost of BiCGSTAB is the computation of the matrix-vector multiplication $A^{(n)}_{\rm{sp}} v$ for some vector $v$. 
Although the matrix $A^{(n)}_{\rm{sp}}$ is full and asymmetric, the matrix-vector multiplication can be effectively evaluated via FFT with the computational cost $\mathcal{O}(N\log N)$, which guarantees that the total computational cost per step is $\mathcal{O}(N\log N)$ as well.
%Furthermore, the iPiano method \cite{ABDR} can be applied to accelerate the convergence of $\Phi^n$ to reduce the total computational cost. 
Compared to the BEFD-splitting scheme \eqref{BEFD-splitting}-\eqref{BEFD-splitting-norm}, the BESP-splitting scheme \eqref{BESP-splitting}-\eqref{BESP-splitting-norm} has the advantage of high accuracy while
the order of the computional cost per step is almost the same,  
which implies the scheme is usually more effective in practice.

\section{Numerical results}\label{sec:numeric}

In this section, we will show numerical experiments with the BEFD-splitting scheme \eqref{BEFD-splitting}-\eqref{BEFD-splitting-norm}  and the BESP-splitting scheme \eqref{BESP-splitting}-\eqref{BESP-splitting-norm}. 
To show the great superiority of our new schemes, 
we will compare the BEFD-splitting/BESP-splitting scheme with other schemes, including CNGF-FD  \eqref{CNGF-FD}, the naive BEFD schemes \eqref{BEFD1}-\eqref{BEFD1_end} and \eqref{BEFD2}-\eqref{BEFD2_end}, and also the extension of the regularized Newton method proposed in \cite{Ruan_thesis,BaoWuWen}. 
Then the spatial accuracy of the BEFD-splitting scheme \eqref{BEFD-splitting}-\eqref{BEFD-splitting-norm} and the BESP-splitting scheme \eqref{BESP-splitting}-\eqref{BESP-splitting-norm} will be tested numerically. 
Finally, we will apply our new methods to compute 
ground states of the MGPE defined in multidimensional space with general external potentials as well as 
to compute the first excited states. 

\subsection{Energy decay and convergence test}\label{subsec:stability}

In this section, we show the conditional energy decay property of  the BEFD-splitting scheme \eqref{BEFD-splitting}-\eqref{BEFD-splitting-norm}  and the BSEP-splitting scheme \eqref{BESP-splitting}-\eqref{BESP-splitting-norm}.

\begin{example}\label{exmp1}
 Consider the MGPE in 1D under the harmonic potential $V(x)=x^2/2$.
The initial data is chosen to be 
\be\label{initial}
\phi_0(x)=\frac{e^{-\frac{x^2}{2}}}{\pi^{1/4}}.
\ee
Two choices of the parameters are considered: 
(I) $\beta=\delta=10$, 
(II) $\beta=-10, \,\delta=10$.
\end{example}

%\begin{figure}[htbp]
%\centerline{\psfig{figure=,height=3cm,width=4 cm,angle=0}\qquad
%\psfig{figure=,height=3 cm,width=4 cm,angle=0}}
%\caption{Ground states of Example \ref{exmp1} for Case I (left) and Case II (right).}
%\label{fig:exmp1}
%\end{figure}

%Fig. \ref{fig:exmp1} shows the ground states of  Example \ref{exmp1} for Case I and Case II, which are computed via the BEFD-splitting method. 
Fig. \ref{fig:energy_dynamic} shows the energy evolution via the BEFD-splitting scheme \eqref{BEFD-splitting}-\eqref{BEFD-splitting-norm} or the BESP-splitting scheme \eqref{BESP-splitting}-\eqref{BESP-splitting-norm} for Case I in Example \ref{exmp1}.  
As shown in the figure, the solution will converge to a steady state  with a small time step $\tau=0.01$.  
As time step increases, the energy will begin to oscillate in time. 
When the time step is large enough, say $\tau=2$, the energy will diverge and the solutions no longer converge to the ground state.  

Different from the classical definition of instability, we say the scheme is \textbf{unstable} with the given parameters if we can't get the correct ground state.
In this sense, the numerical test implies that the BEFD-splitting scheme and  BESP-splitting scheme are only conditionally stable. 
%Given time step $\tau$ and mesh size $h$, the stability region used for numerical tests is defined as 
%\be\nonumber
%R_{\tau,h}:=\{ (\beta,\delta) \,|\, \phi^n \text{ converges with } \tau, h, \beta, \delta \text{ and initial value } \phi_0(x) \,\eqref{initial}\}.
%\ee
%It would be an interesting problem to find the borderline between the stability region and instability region. 
Fig. \ref{fig:BESP_dtdh} shows the effect of the time step and mesh size on the stability region of BESP-splitting.  
As shown in the figure, the scheme is more stable with a small time step $\tau$ and a large mesh size $h$,
 and the stability of the scheme would not be much affected by $h$ if $h$ is small enough.  
In practical numerical computation, a smaller mesh size is needed sometimes for better accuracy and a relatively larger time step is preferred for a faster convergence. 
Therefore, we need to balance the stability, accuracy and efficiency by choosing a proper mesh size and time step in practical numerical computation.

The performance of the gradient flow method depends on the choice of the initial value as well. A good choice of the initial value would not only accelerate the convergence of the solutions, but also improve the stability of the scheme. 
For MGPE with weak or strong nonlinearity and with special external potentials, such as box potential and harmonic potential, the proper choices can be found in \cite{Ruan_thesis,mgpe-asym}. 
For MGPE with general external potentials, the multigrid technique (see \cite{Briggs,Wesseling} and references therein) using a hierarchy of discretizations would be a good choice to increase the stability and efficiency of the schemes. %since the ground state computed on a coarse grid is close to the ground state computed on a finer grid. 
In Table \ref{tab:multigrid}, we show how the multigrid technique improves the stability of the scheme.  We consider the case with $h=0.25$ and $\tau=0.01$. 
To apply the multigrid technique, we start with the problem with $h=0.5$ and $\tau=0.01$ to get the ground state $\phi_g^c$ on the coarse grid. Then $\phi_g^c$ is refined to be on the grid with mesh size $h=0.25$ in a proper way. 
With the refined function as the initial data, we apply the BEFD-splitting/BESP-splitting method again to get the ground state. 
As shown in Table \ref{tab:multigrid}, the stability is significantly improved with this simple technique.

\begin{figure}[h!]
\centerline{\psfig{figure=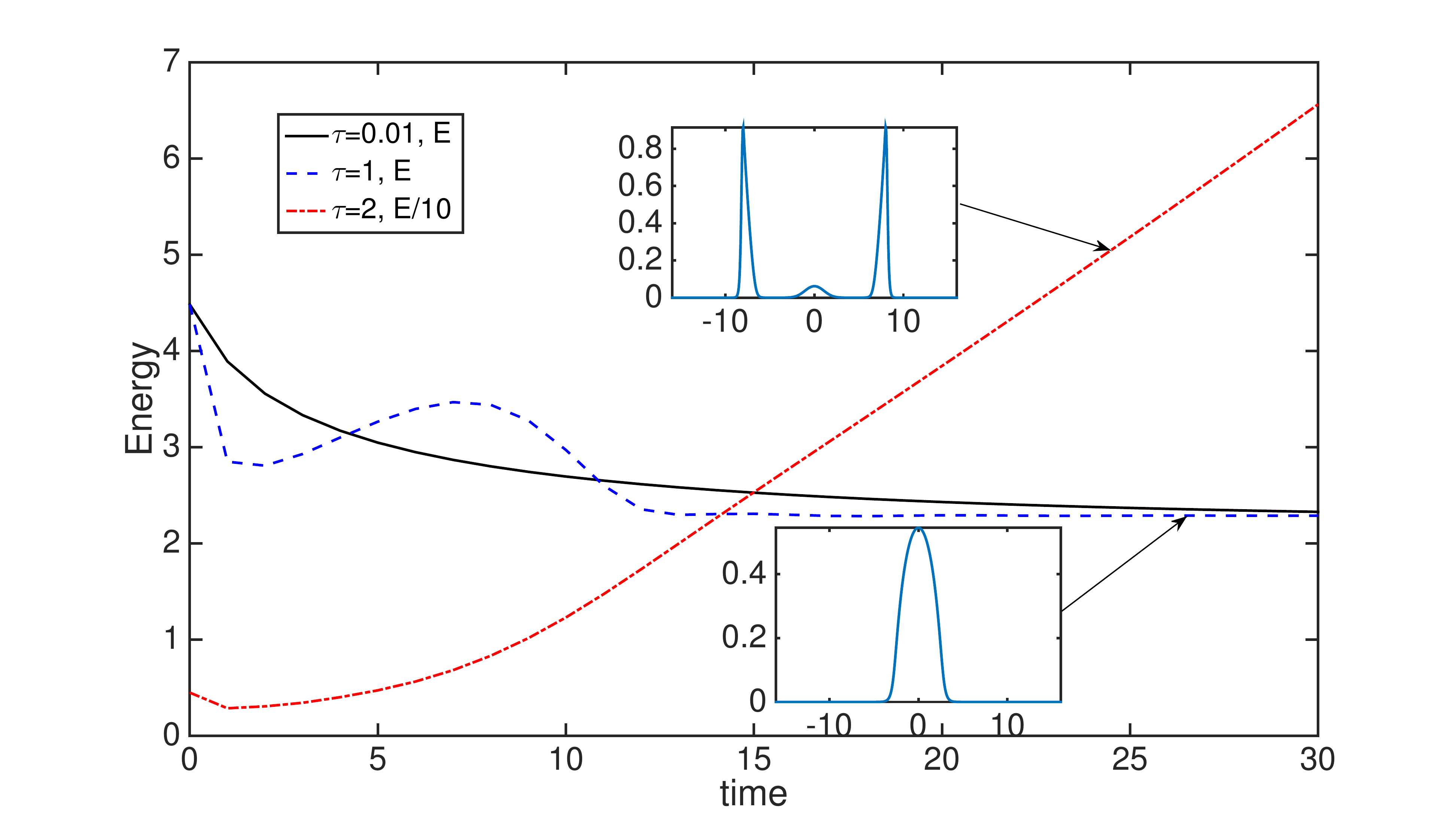,height=4.5cm,width=6.5cm,angle=0}
\psfig{figure=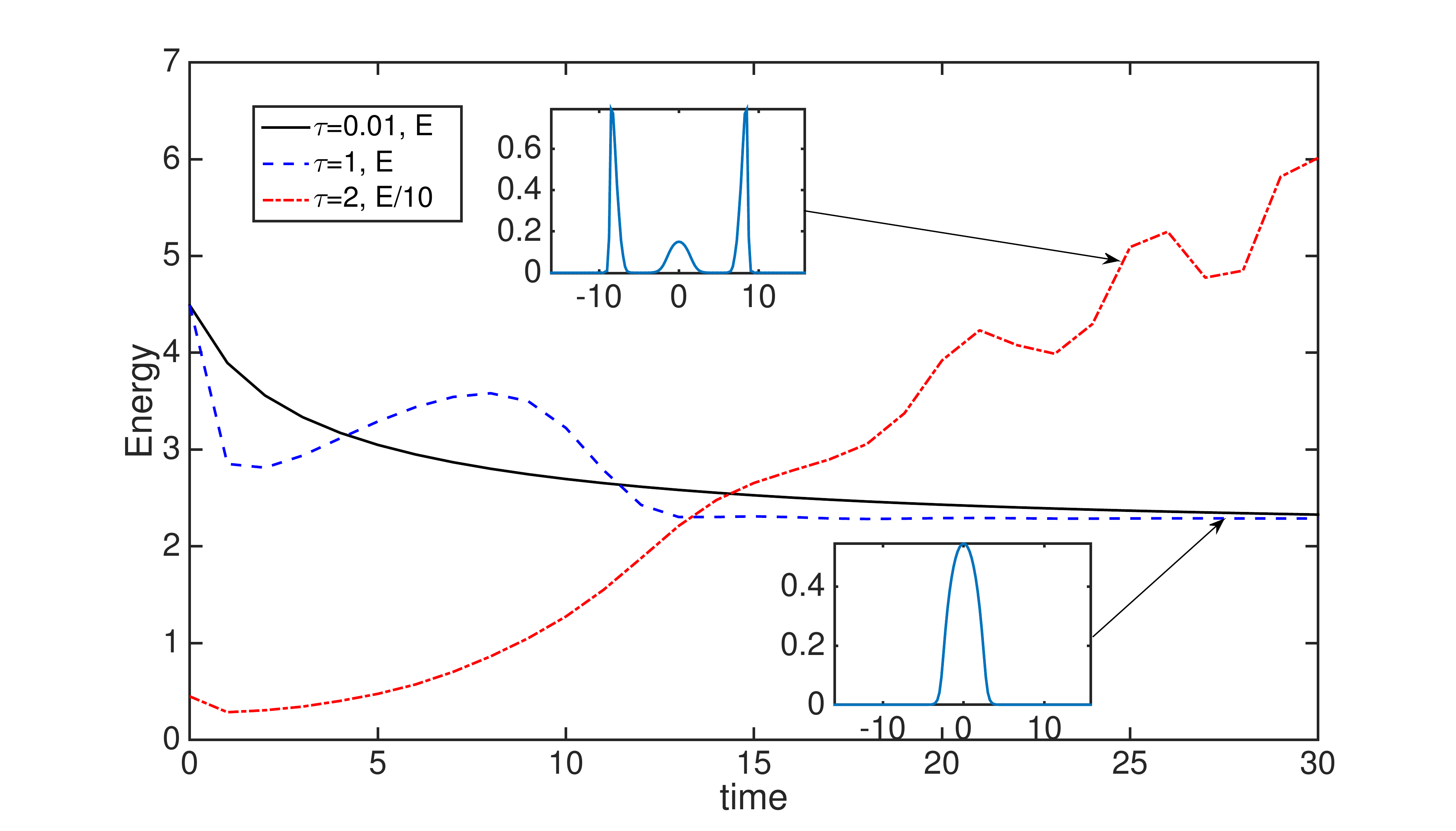,height=4.5cm,width=6.5cm,angle=0}}
\caption{Energy dynamics via  BEFD-splitting  \eqref{BEFD-splitting}-\eqref{BEFD-splitting-norm} (left) and  BESP-splitting  \eqref{BESP-splitting}-\eqref{BESP-splitting-norm} (right) with different choices of time steps.}
\label{fig:energy_dynamic}
\end{figure}

\begin{figure}[h!]
\centerline{\psfig{figure=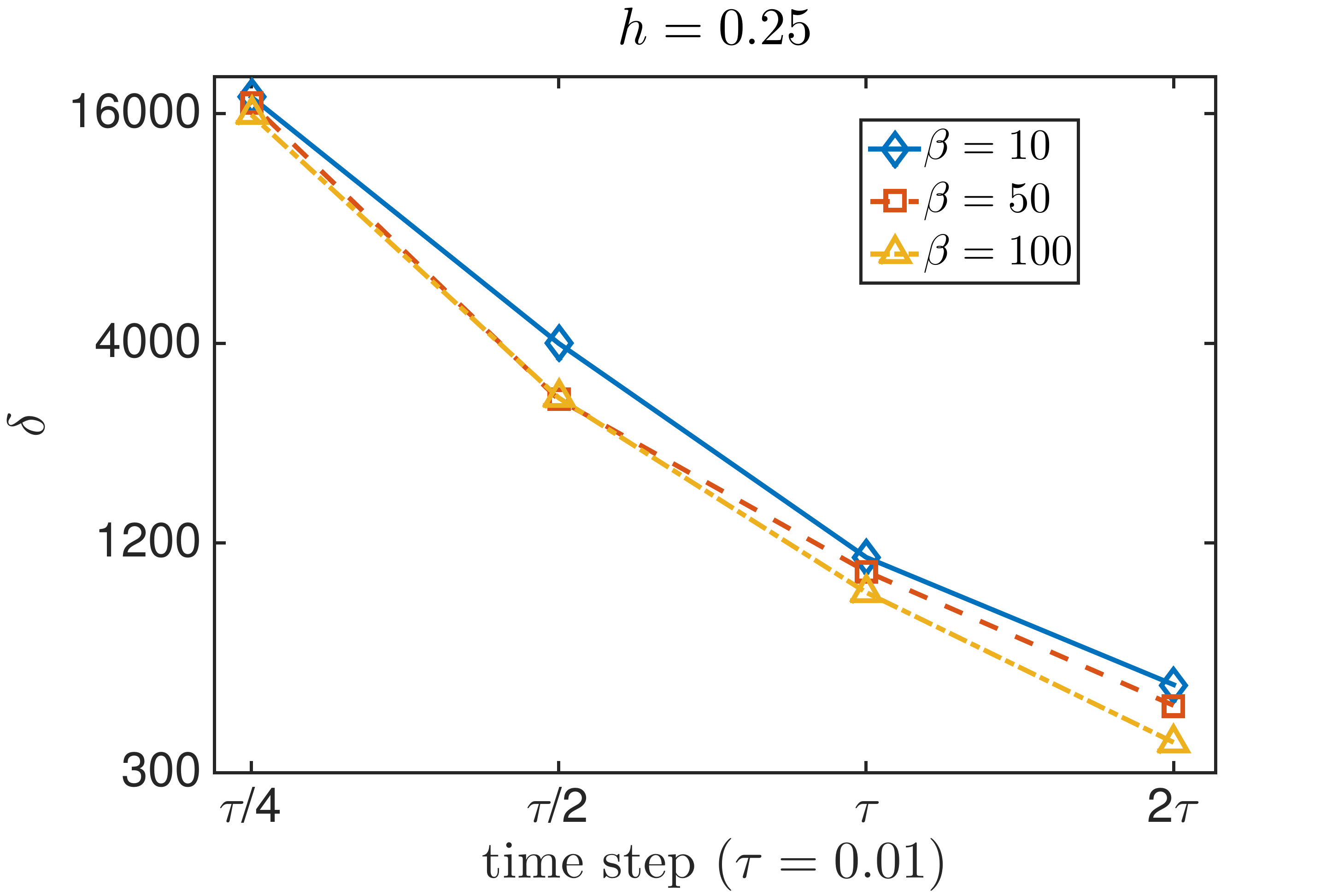,height=4.5cm,width=6.5cm,angle=0}
\psfig{figure=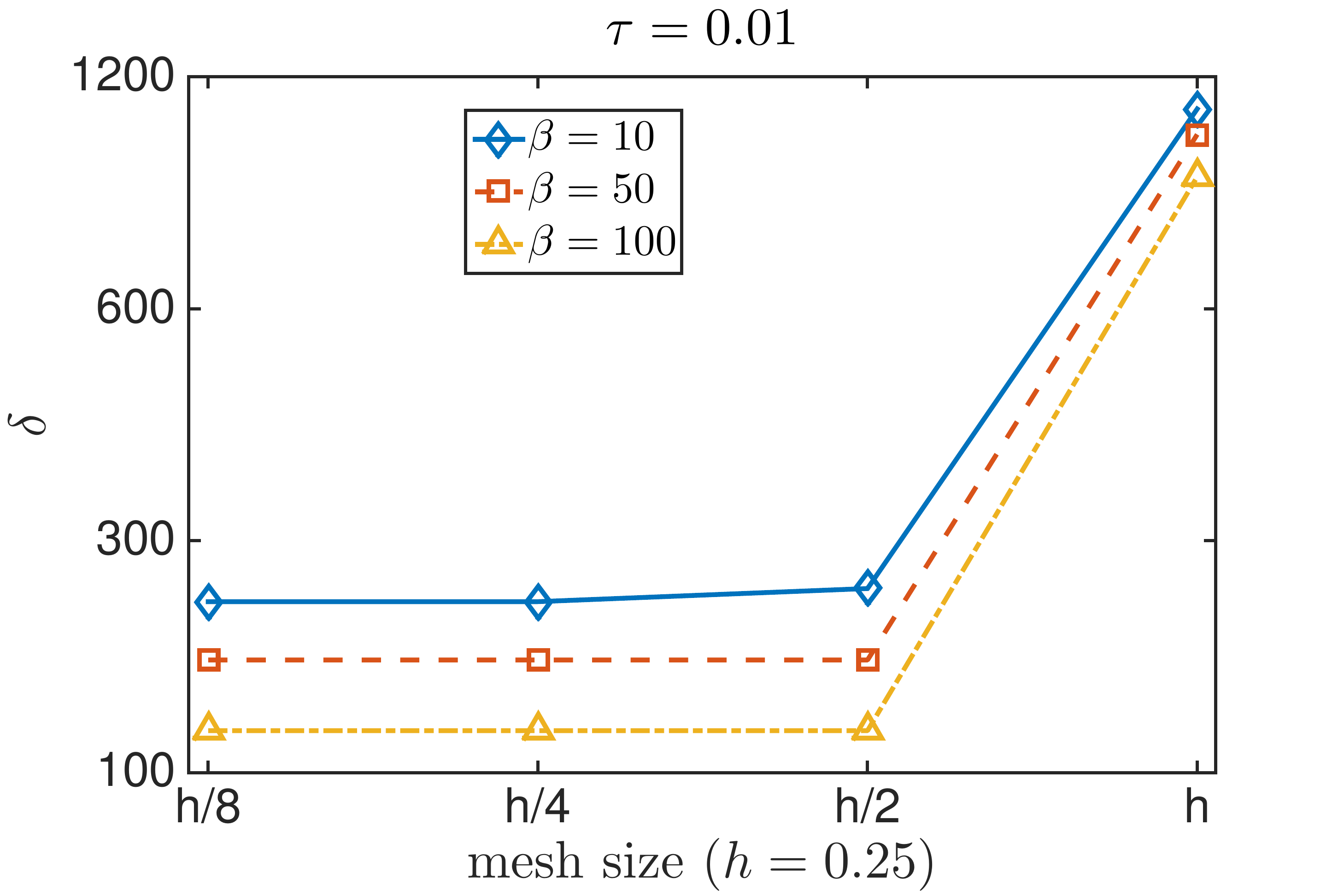,height=4.5cm,width=6.5cm,angle=0}}
\caption{Effect of time step $\tau$ and mesh size $h$ on the stability region of BESP-splitting \eqref{BESP-splitting}-\eqref{BESP-splitting-norm}. One remark for the figure on the right is that, when mesh size is $2h$, the scheme is stable with $\delta>50000$ for all three choices of $\beta$.}
\label{fig:BESP_dtdh}
\end{figure}

\begin{table}
  \centering
  \begin{tabular}{c c}
    \hline
    BEFD-splitting & $\delta$ \\
    \hline
    \hline
    without multigrid & $\approx$4000  \\
    \hline
    multigrid  &  $>$50000 \\
    \hline
  \end{tabular}
\quad
  \begin{tabular}{c c}
    \hline
    BESP-splitting & $\delta$ \\
    \hline
    \hline
    without multigrid & $\approx$890  \\
    \hline
    multigrid  & $>$50000 \\
    \hline
  \end{tabular}
  \caption{Largest possible $\delta$ computable by BEFD-splitting \eqref{BEFD-splitting}-\eqref{BEFD-splitting-norm} or BESP-splitting \eqref{BESP-splitting}-\eqref{BESP-splitting-norm} with $h=0.25$, $\tau=0.01$ and $\beta=100$. The scheme with the multigrid technique starts from $h=0.5$ and uses the refined solution as the initial data for the problem where $h=0.25$. }
  \label{tab:multigrid}
\end{table}

\subsection{Comparison with different schemes}
In this section, we compare the BEFD-splitting/BESP-splitting schemes with other schemes, especially the CNGF-FD  \eqref{CNGF-FD} and the BEFD \eqref{BEFD1}-\eqref{BEFD1_end} and \eqref{BEFD2}-\eqref{BEFD2_end} proposed in this paper, on the stability region. 
The results are summarized in Fig. \ref{fig:stability}. 

Fig. \ref{fig:stability_sub1} shows the stability region when $\beta=10$ and $\delta=10$. The method will be stable with the time step $\tau$ and mesh size $h$  if the point $(h,\tau)$ is below the line, and vice versa. Therefore, the best method will be the  least restrictive one, i.e. the borderline of the stability region should be on the top.  
As shown in the figure, when $\beta=\delta=10$, the time step can be chosen to be $\mathcal{O}(1)$ for the  BEFD-splitting/BESP-splitting scheme. However, for the other three methods, we need $\tau\approx\mathcal{O}(h^2)$%. When $h=1/16$, we need $\tau\ll10^{-3}$
, which is very restrictive.  
In this sense, we can conclude that the BEFD-splitting method and the BESP-splitting method are much more stable than the other methods.

Fig. \ref{fig:stability_sub2} shows the borderline of the stability region when $\tau=0.01$ and $h=0.25$. The scheme is stable if the point $(\beta,\delta)$ is below the line. Similarly, a good scheme should be the one applicable to the case with strong nonlinearity, i.e. large $\delta$ and $\beta$. Therefore. the borderline for the best scheme should be on the top. 
As shown in the figure, the BEFD-splitting scheme is the best one and the BEFD-splitting/BESP-splitting schemes are much better than the other methods. 
The huge difference indicates that the attractive-repulsive splitting of the term $\delta\Delta(|\phi|^2)\phi$ does improve the stability significantly, which makes the gradient method suitable for practical numerical computation.

\begin{figure}[h!]
\begin{subfigure}{.5\textwidth}
  \centering
  \includegraphics[height=4.5cm,width=6.5cm]{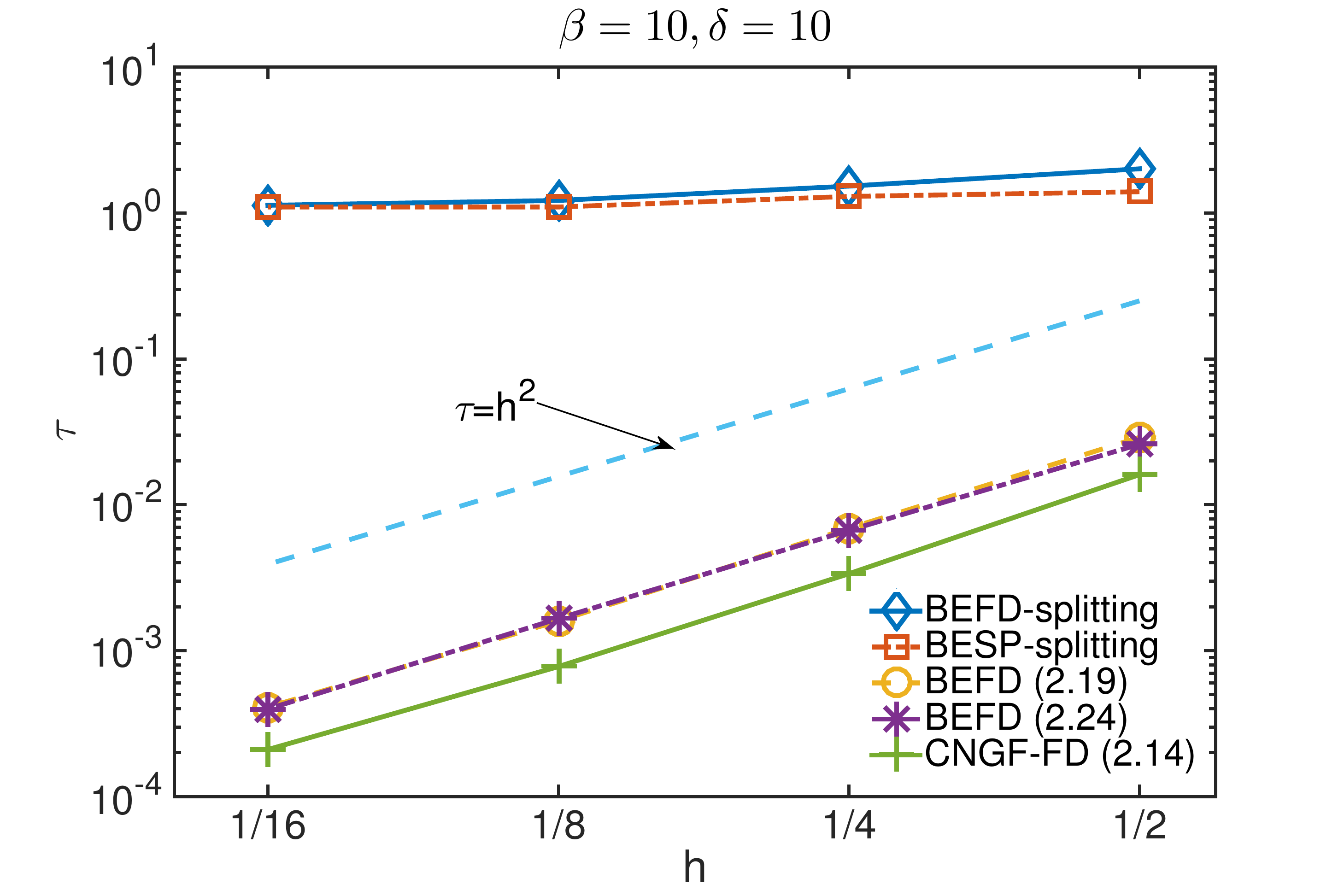}
  \caption{Fix $\beta$ and $\delta$.}
  \label{fig:stability_sub1}
\end{subfigure}%
\begin{subfigure}{.5\textwidth}
  \centering
  \includegraphics[height=4.5cm,width=6.5cm]{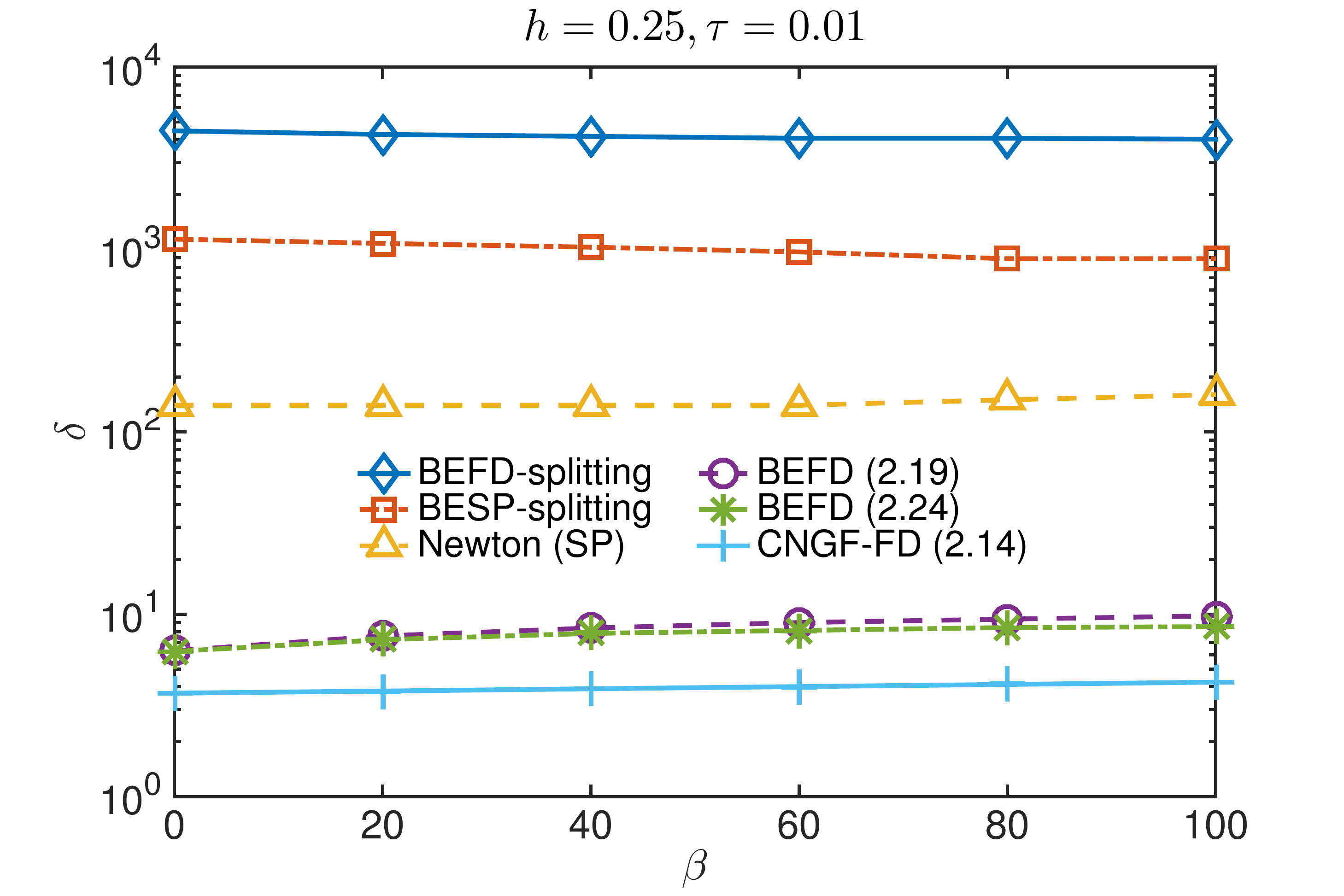}
  \caption{Fix $h$ and $\tau$.}
  \label{fig:stability_sub2}
\end{subfigure}
%\centerline{\psfig{figure=stability_test.pdf,height=4.5cm,width=6.5cm,angle=0}
%\psfig{figure=stability_test2.pdf,height=4.5cm,width=6.5cm,angle=0}}
\caption{Stability test for different schemes. The lines denote the borderlines of the stability region and the part below the line corresponds to the region where the scheme is stable. }
\label{fig:stability}
\end{figure}

Finally, we compare the BEFD-splitting/BESP-splitting scheme with the regularized Newton method proposed in \cite{BaoWuWen}, which minimizes the discrete energy functional directly under the normalization constraints. The extension of the method to MGPE is straightforward \cite{Ruan_thesis}. 
We compare the stability region of the regularized Newton method discretized via pseudo-spectral method in space with the BEFD-splitting/BESP-splitting method and the result is shown in Fig. \ref{fig:stability_sub2}. 
Numerical experiments show that the regularized Newton method will give an oscillatory solution, which is apparently different from the exact solution, when $\delta$ is large. 
In such cases, we call the method unstable, which is consistent with our previous definition of the stability of a scheme. 
As shown in the figure, the regularized Newton method is much less stable than the BEFD-splitting/BESP-splitting method. 
One result not shown in Fig. \ref{fig:stability_sub2} is that the regularized Newton method with finite difference discretization in space is quite stable, even when $\delta$ is extremely large. 
However, as indicated in Table \ref{tab:multigrid}, a simple multigrid technique will improve the stability of the BEFD-splitting/BESP-splitting scheme signicantly and make the schemes competitive with the regularized Newton method with finite difference discretization in space.  

\subsection{Spatial accuracy}
In this section, we test the spatial accuracy of the BEFD-splitting/BESP-splitting schemes 
by considering Example \ref{exmp1} with  $\beta =\delta=10$. 
The initial data is chosen to be \eqref{initial}. 
The problem is solved via BEFD-splitting and BESP-splitting, respectively,  on $[-8,8]$ with $\tau=0.001$. 
The steady state solution is reached when the difference between two consecutive steps are extremely small. In our numerical tests, we stop when 
\be
\frac{\|\Phi^{n+1}-\Phi^n\|_h}{\tau}<10^{-10}.
\ee 
Denote $\phi_{g,h}^{\rm{FD}}$ and $\phi_{g,h}^{\rm{SP}}$ to be the steady states we get via the BEFD-splitting and BESP-splitting, respectively. 
Let $\phi_g$ be the `exact' ground state, which is computed numerically by BESP-splitting with  $h=1/32$ and the same time step $\tau=0.001$. Its corresponding discrete energy, which is denoted as $E_g$, is computed via \eqref{energy:sp}.

\begin{table}
  \centering
  \begin{tabular}{c c c c c }
    \hline
    Error & $h=1/2$ & $h/2$ & $h/2^2$ & $h/2^3$ \\
    \hline
    $|E(\phi_{g,h}^{\rm{FD}})-E(\phi_g)|$ & 9.33E-3 & 2.37E-3 & 5.95E-4 & 1.49E-4 \\
    rate & -& 1.98 & 1.99 & 2.00 \\
    \hline
    $\|\phi_{g,h}^{\rm{FD}}-\phi_g\|_{l_2}$ & 5.14E-3 & 1.30E-3 & 3.23E-4 & 8.06E-5 \\
    rate & -& 1.98 & 2.01 & 2.00  \\
    \hline
     $\|\phi_{g,h}^{\rm{FD}}-\phi_g\|_{\infty}$ & 4.21E-3 & 1.11E-3 & 2.84E-4 & 7.08E-5 \\
    rate & - & 1.93 & 1.96 & 2.01 \\
    \hline
  \end{tabular}
  \caption{Spatial resolution of the ground state for $\beta=\delta=10$ computed by the BEFD-splitting method.}
  \label{tab:spatial_test1}
\end{table}

\begin{table}
  \centering
  \begin{tabular}{c c c c c }
    \hline
    Error & $h=1$ & $h/2$ & $h/2^2$ & $h/2^3$ \\
    \hline
    $|E(\phi_{g,h}^{\rm{SP}})-E(\phi_g)|$ & 1.66E-3 & 1.62E-6 & 1.13E-9 & 5.38E-12 \\
%    rate & -& 10.00 & 10.48 & 7.72 \\
    \hline
    $\|\phi_{g,h}^{\rm{SP}}-\phi_g\|_{l_2}$ & 3.50E-3 & 2.10E-4 & 3.63E-7 & 5.59E-9 \\
%    rate & -& 4.06 & 9.17 & 6.02  \\
    \hline
     $\|\phi_{g,h}^{\rm{SP}}-\phi_g\|_{\infty}$ & 2.02E-3 & 2.02E-4 & 2.53E-7 & 3.88E-9 \\
%    rate & - & 3.32 & 9.65 & 6.03 \\
    \hline
  \end{tabular}
  \caption{Spatial resolution of the ground state for $\beta=\delta=10$ computed by the BESP-splitting method.}
  \label{tab:spatial_test2}
\end{table}

Table \ref{tab:spatial_test1} and Table \ref{tab:spatial_test2} indicate that the BEFD-splitting scheme is second order accurate in space while the BESP-splitting scheme is spectrally accurate. Therefore,  the BESP-splitting scheme requires much less grid points and, therefore, much less computation memory, to get the same order of accuracy. 
%However, as mentioned in Remark \ref{rem:BESP}, the computational cost of the BESP-splitting method per step is much higher due to the fact that the FFT is not applicable here. 
%Therefore, the BEFD-splitting method is still competitive with the BESP-splitting method when considering the total computational cost. 
%What's more, the BEFD-splitting scheme is stable for sufficiently large $\delta$.   
%Therefore, the BEFD-splitting method might have better performance than the BESP-splitting scheme in certain cases. 

\subsection{Ground states in multidimensional space} %computed by the BEFD-splitting method}
\label{numeric:ground}

In this section, we apply the BEFD-splitting scheme to compute ground states in more general cases. We start with Example \ref{exmp1} 
and compute the ground states for various choices of $\beta$ and $\delta$ to see how the parameters affect the profiles of the ground states. 
The numerical results are shown in Fig. \ref{fig:mgpe_1D_ground}, from where 
we can observe that both the increase of  $\beta$ and $\delta$ will spread the ground state and the phenomenon  would be more obvious if the other parameter is small. 

\begin{figure}[htbp]
\centerline{\psfig{figure=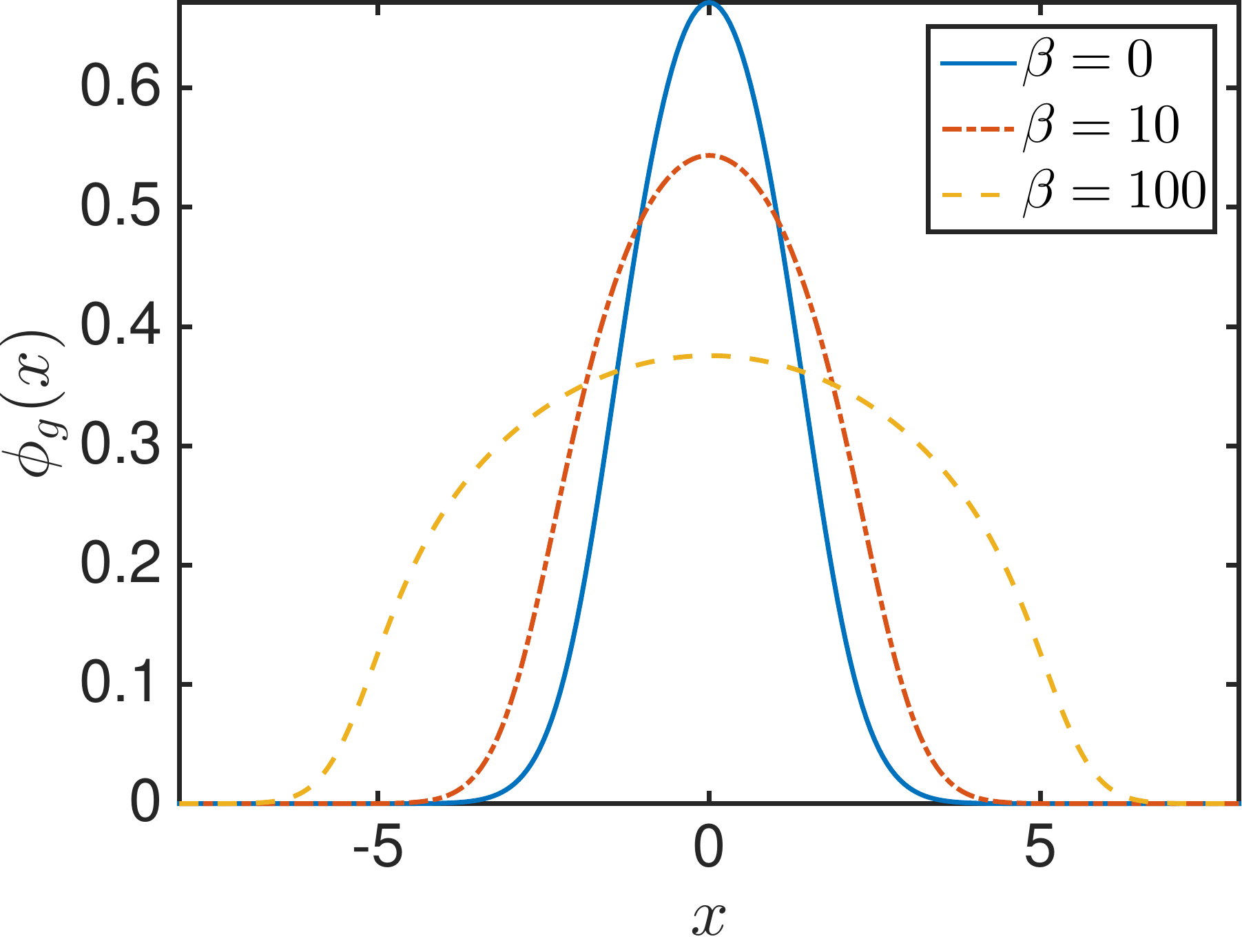,height=\hfig cm,width=\wwfig cm,angle=0}
\psfig{figure=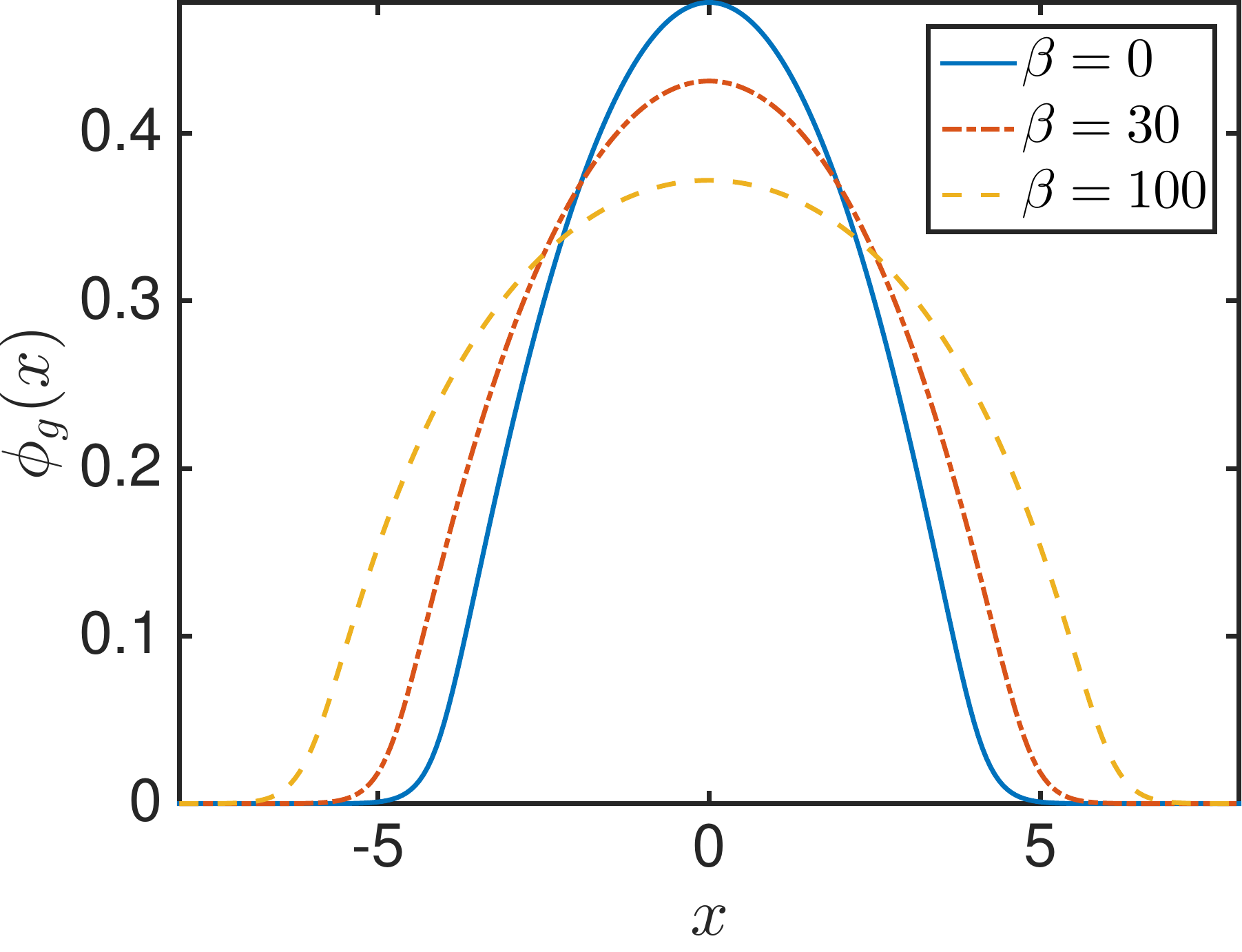,height=\hfig cm,width=\wwfig cm,angle=0}}
\centerline{\psfig{figure=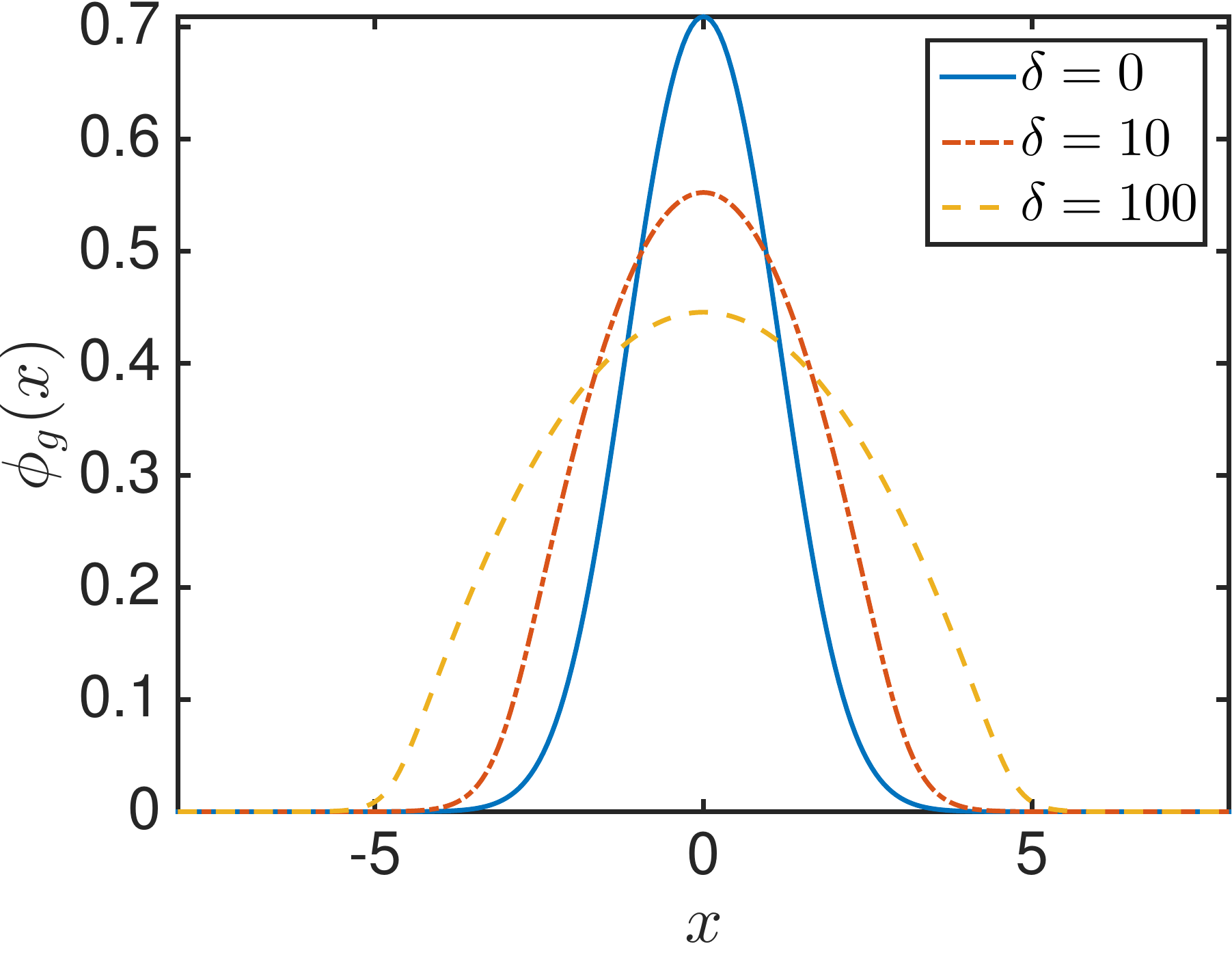,height=\hfig cm,width=\wwfig cm,angle=0}
\psfig{figure=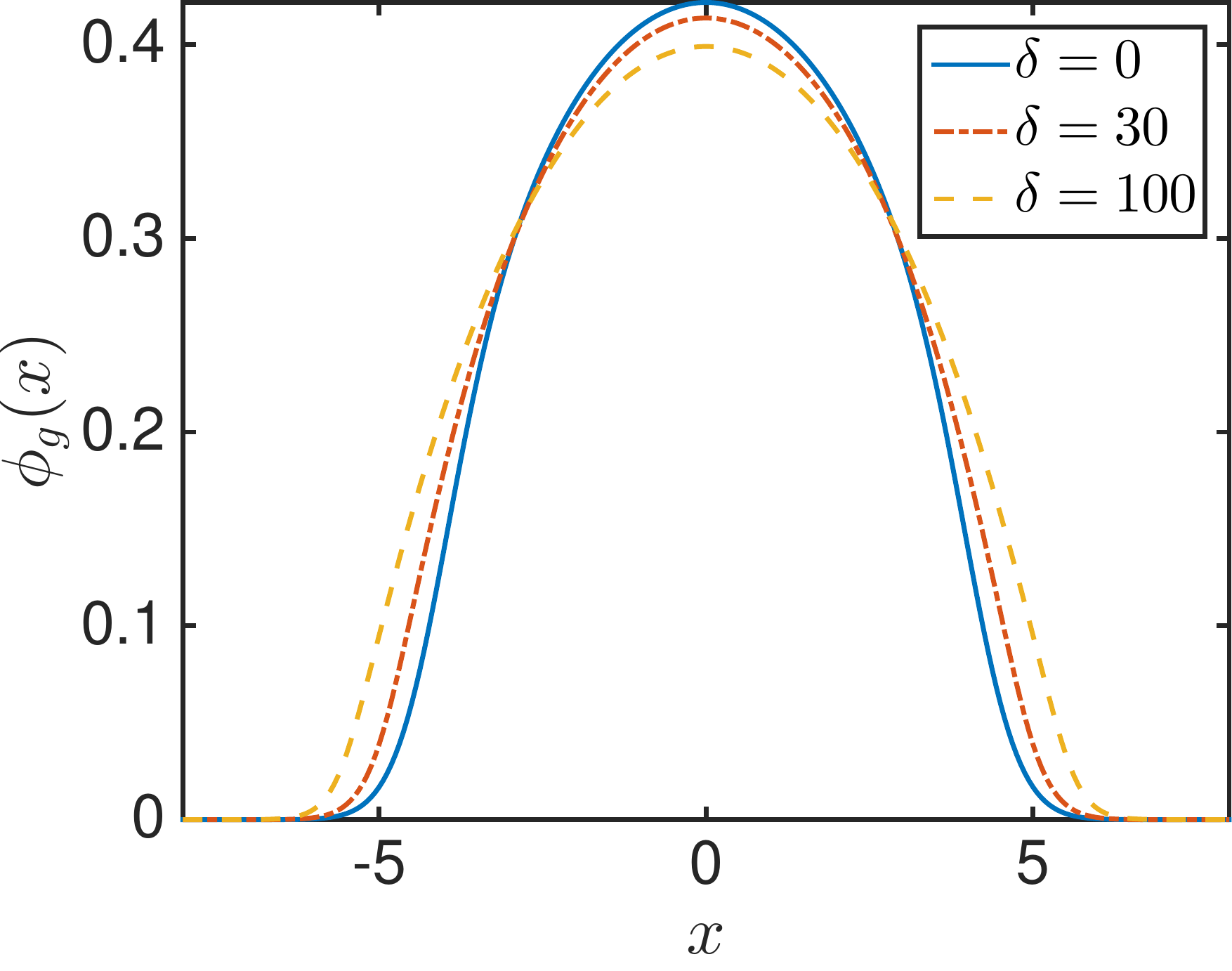,height=\hfig cm,width=\wwfig cm,angle=0}}
\caption{The top row shows the ground states of Example \ref{exmp1} with fixed $\delta=1$ (left) or fixed $\delta=50$ (right) and different $\beta$'s. 
The second row shows the ground states of Example \ref{exmp1} with fixed $\beta=1$ (left) or fixed $\beta=50$ (right) and different $\delta$'s.}
\label{fig:mgpe_1D_ground}
\end{figure}

 The BEFD-splitting method can be applied to multidimensional problems as well. Next we show a case in 2D defined in a bounded domain. 
\begin{example}\label{exmp_2D}
Consider the MGPE under the box potential defined in  $[0,1]^2$, i.e. 
\begin{align}
V(x,y)=\begin{cases}
0, \quad &(x,y)\in[0,1]^2,\\
\infty,\quad &\text{ otherwise }.
\end{cases}
\end{align}
The initial data is chosen to be $\phi_0(x,y)=\sin(\pi x)\sin(\pi y)$.
\end{example}
Obviously, the ground state in Example \ref{exmp_2D} is constrained in $[0,1]^2$ with homogenous Dirichlet boundary conditions. 
By applying the BEFD-splitting scheme, we get the ground states with different choices of  $\beta$ and $\delta$. 
Again, we observe that both the increase of $\beta$ and $\delta$ will make the ground state less concentrated in the center. However, for this problem, the limiting profiles as $\beta\to\infty$ or $\delta\to\infty$ are obviously different. 
If $\beta\gg1$ and $\delta$ is fixed, the ground state is flat and a boundary layer appears. However, such a boundary layer does not appear when $\delta\gg1$ with $\beta$ fixed. 
The results are consistent with the results in \cite{mgpe-th,mgpe-asym,Tho} and the details are omitted here for brevity.

\begin{figure}[htbp]
\centerline{\psfig{figure=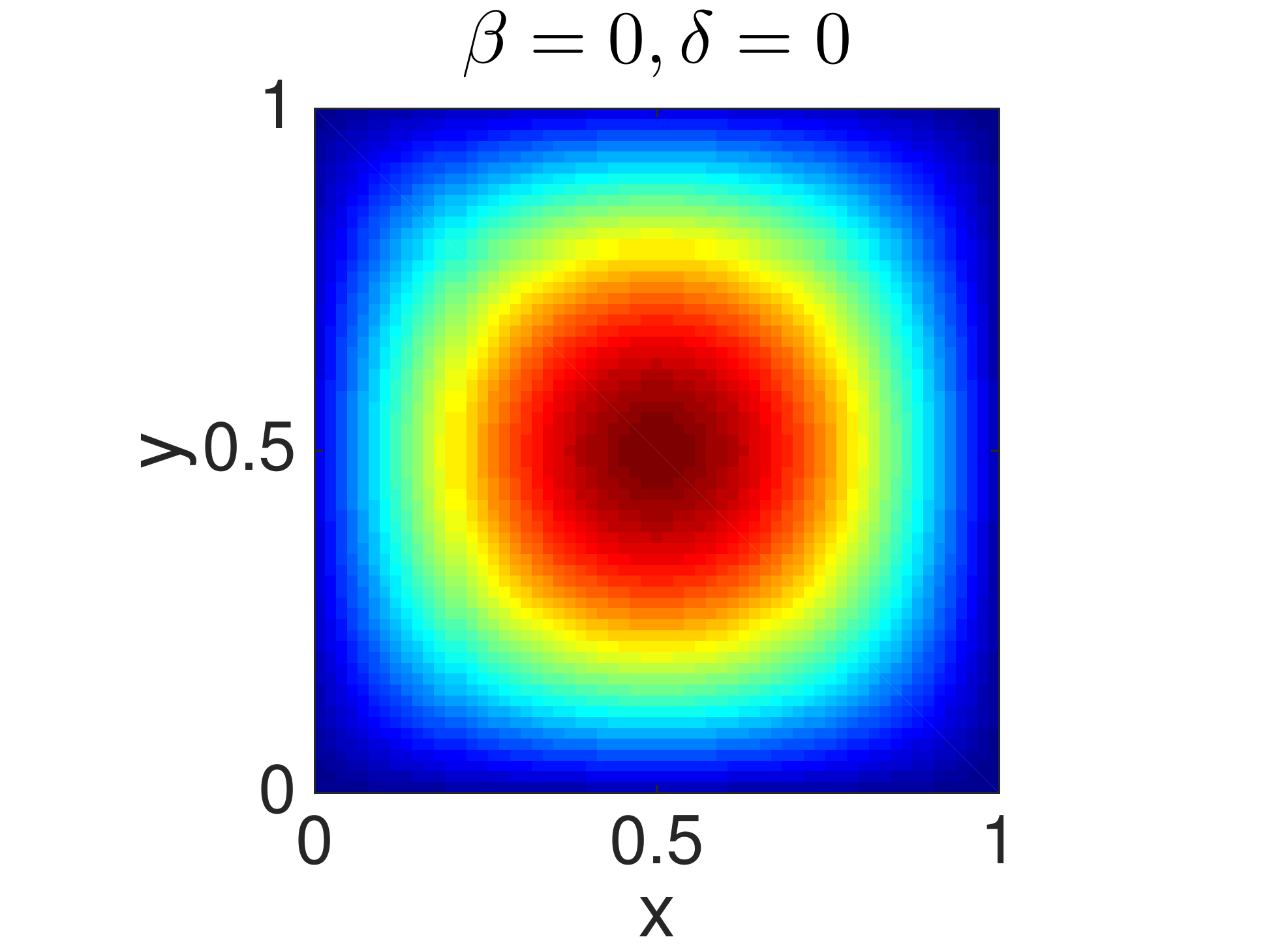,height=\hfig cm,width=\wfig cm,angle=0}
\psfig{figure=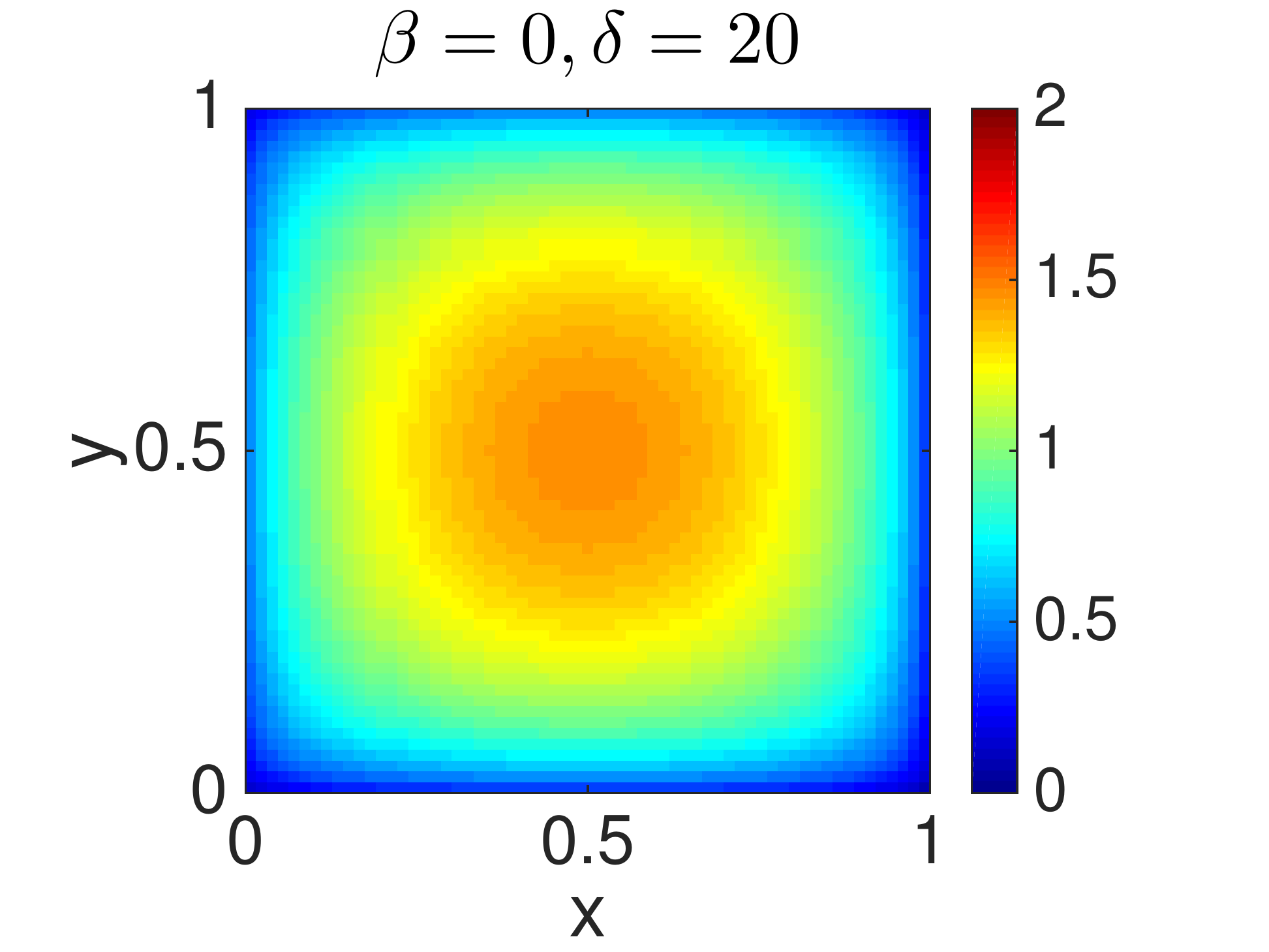,height=\hfig cm,width=\wfig cm,angle=0}}
\centerline{\psfig{figure=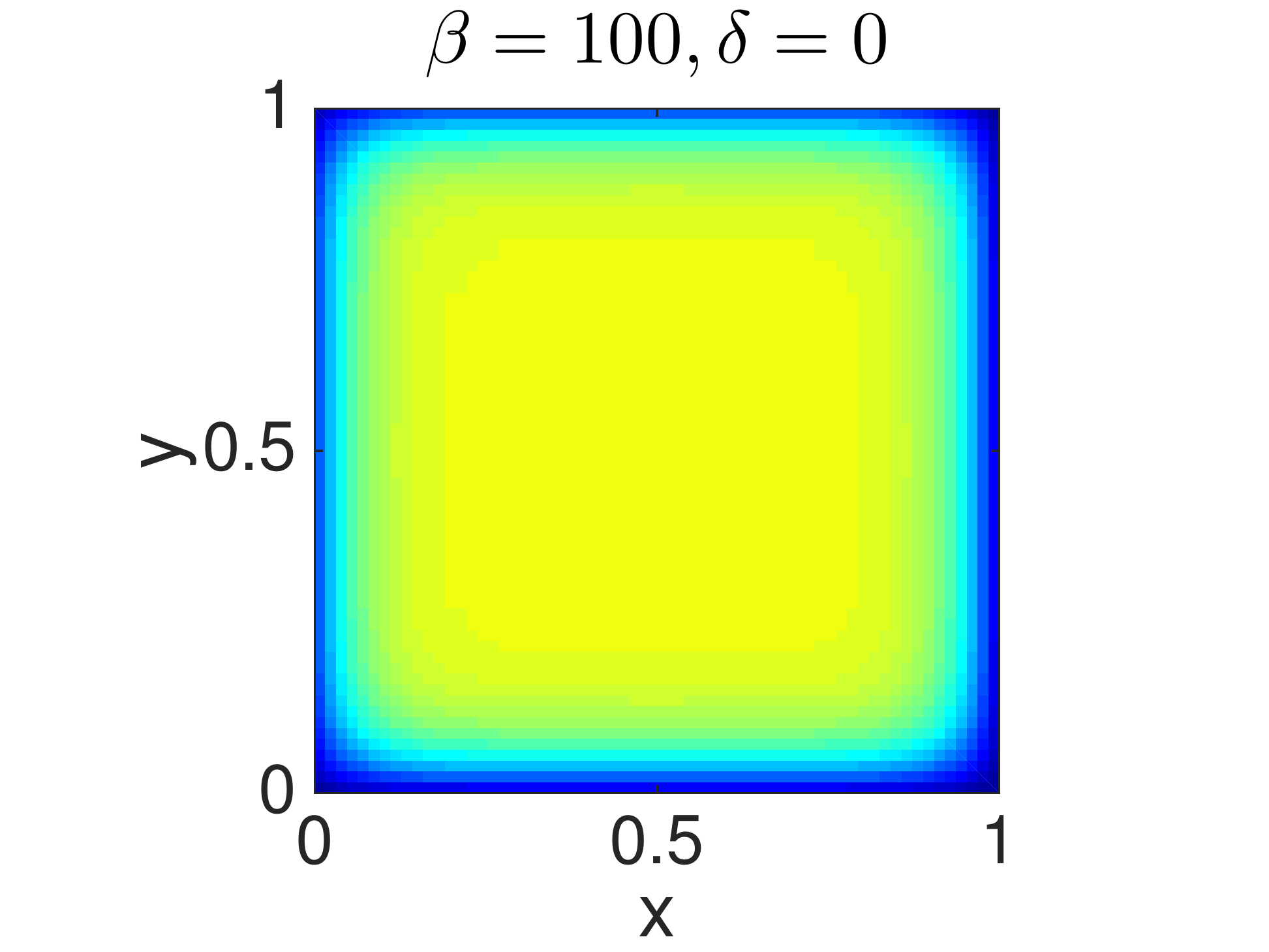,height=\hfig cm,width=\wfig cm,angle=0}
\psfig{figure=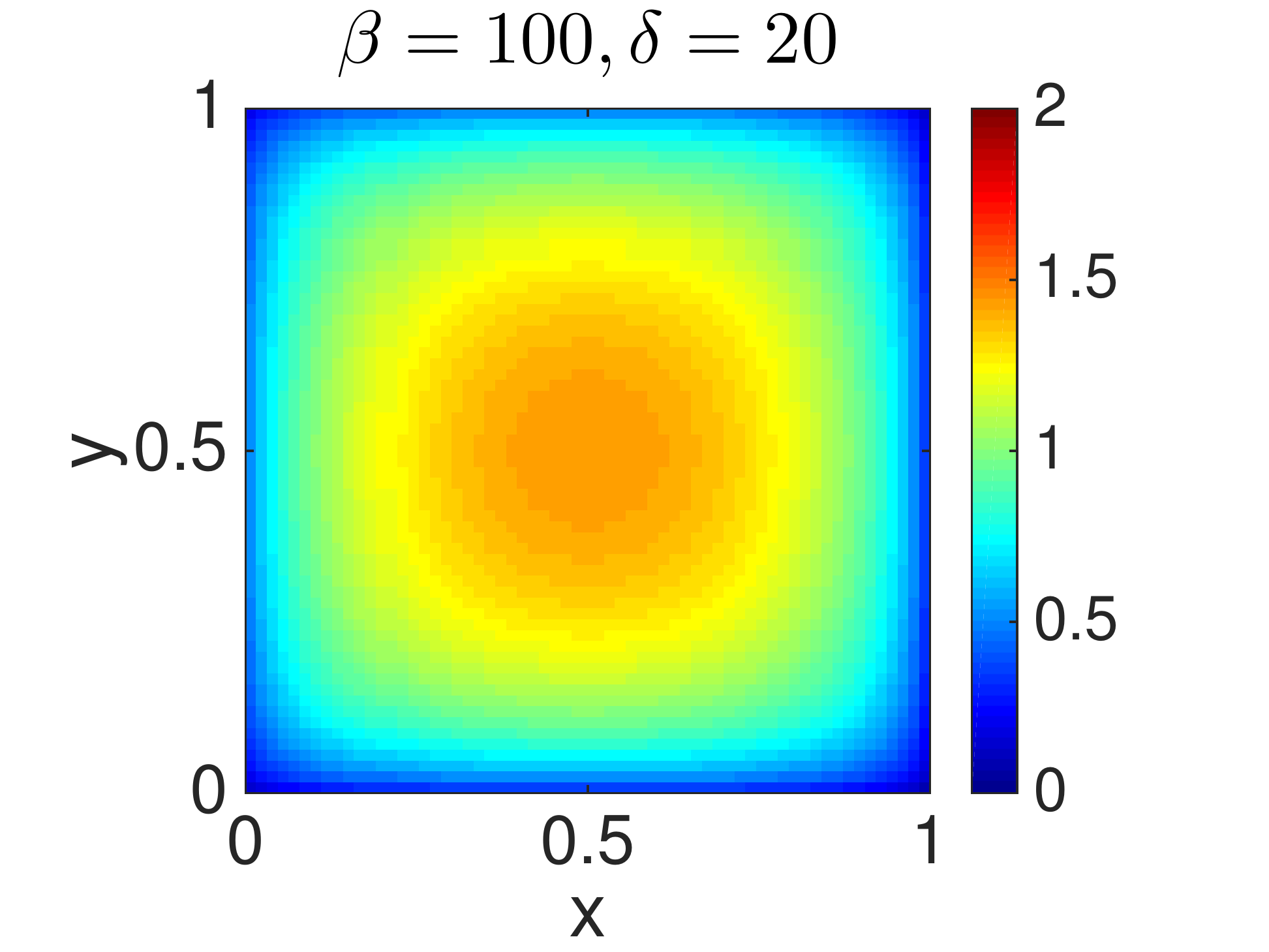,height=\hfig cm,width=\wfig cm,angle=0}}
\caption{Ground states $\phi_g^{\beta,\delta}(x,y)$ with $\beta=0,100$ (from top to bottom) and $\delta=0,20$ (from left to right) under the box potential in $(0,1)^2$.}
\label{fig:mgpe_test1}
\end{figure}

Finally, we  show two numerical examples in 3D, namely Example \ref{exmp2} and Example \ref{exmp3}.  
The BEFD-splitting scheme is applied to compute the ground states in the examples,  
and the isosurfaces are shown in Fig. \ref{fig:mgpe_test2}. 

\begin{example}\label{exmp2}
 Consider the MGPE in 3D under the harmonic potential 
 \be
 V(x,y,z)=(x^2+4y^2+4z^2)/2
 \ee
  with $\beta=1$ and $\delta=20$.
The initial data is chosen to be 
\be\label{exmp2:initial}
\phi_0(x)=\frac{\sqrt{2}}{\pi^{3/4}}e^{-(x^2+2y^2+2z^2)/2}.
\ee
\end{example}
\begin{example}\label{exmp3}
 Consider the MGPE in 3D under the harmonic potential in optical lattices 
 \be
 V(x,y,z)=(x^2+4y^2+4z^2)/2+20(\sin^2(x)+\sin^2(y)+\sin^2(z))
 \ee
  with $\beta=1$ and $\delta=20$ and the initial data \eqref{exmp2:initial}.
\end{example}

\begin{figure}[htbp]
\centerline{\psfig{figure=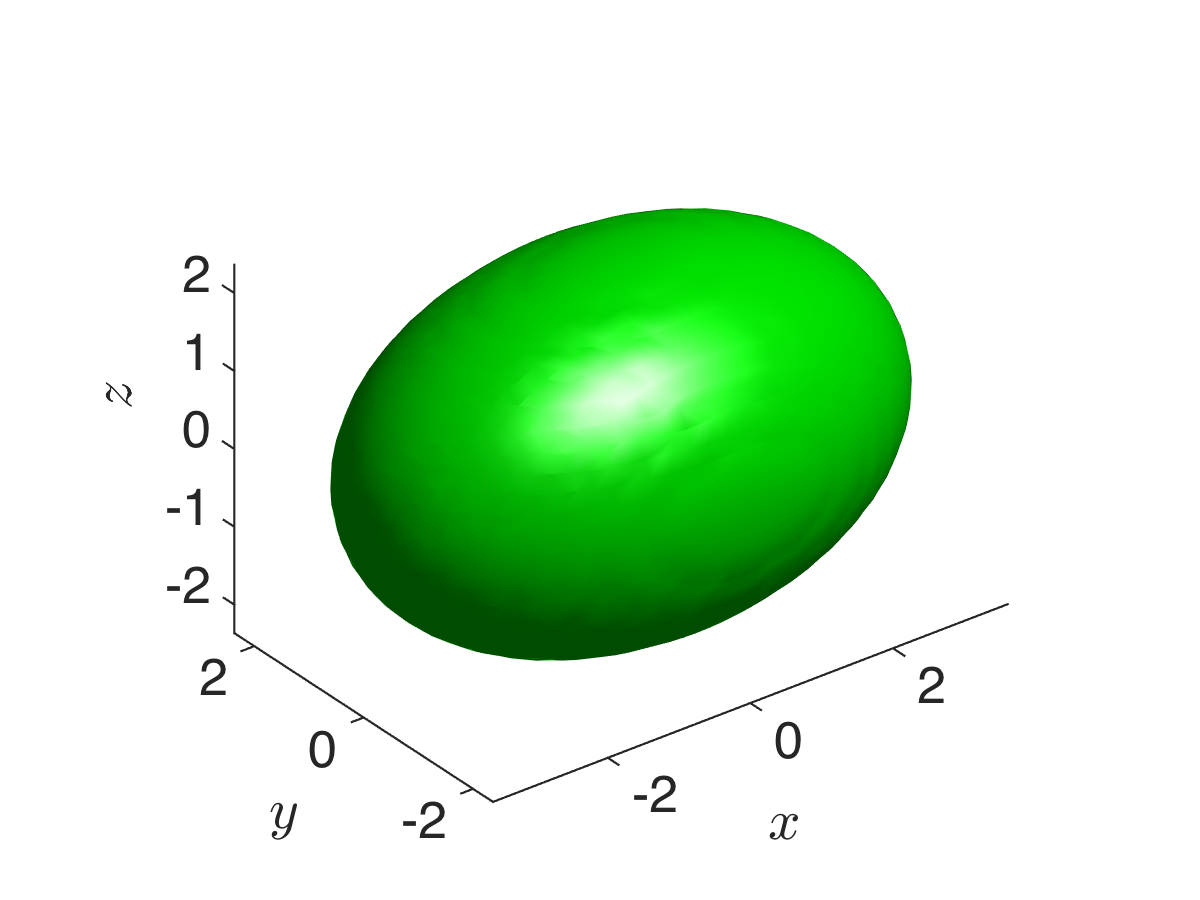,height=\hfig cm,width=\wfig cm,angle=0}
\psfig{figure=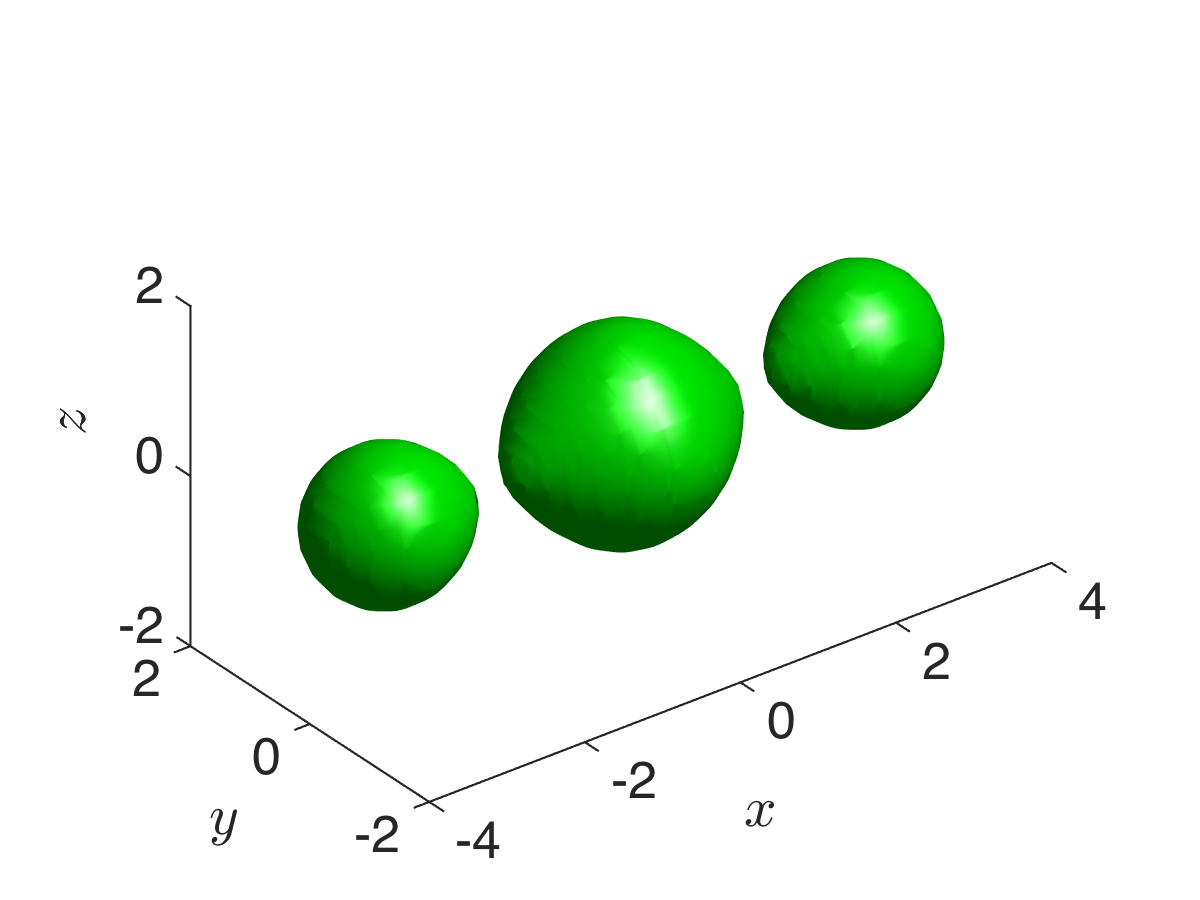,height=\hfig cm,width=\wfig cm,angle=0}}
\caption{Isosurface of the ground states of Example \ref{exmp2} with isovalue 0.01 (left) and Example \ref{exmp3}  with isovalue 0.15 (right). }
\label{fig:mgpe_test2}
\end{figure}

\subsection{Extension to the computation of the first excited state}

For special external potentials, 
the BEFD-splitting/BESP-splitting schemes can also be applied to compute the first excited state, which is the stationary state with the second lowest energy, when the initial data is properly chosen. 
Inspired by the classical GPE case in  \cite{Wz1},   
we can choose the initial data to be the odd function
\be
\phi_0(x)=\frac{\sqrt{2}}{\pi^{1/4}}xe^{-x^2/2}
\ee
to compute the first excited state in Example \ref{exmp1}.
 Fig. \ref{fig:mgpe_1D_1st} shows the first excited states in Example \ref{exmp1} with different choices of $\beta$ and $\delta$. 
For multidimensional problems, the choice of the initial data is similar, and the details can be referred to \cite{Bao2013,gap} and the references therein. 

 \begin{figure}[htbp]
\centerline{\psfig{figure=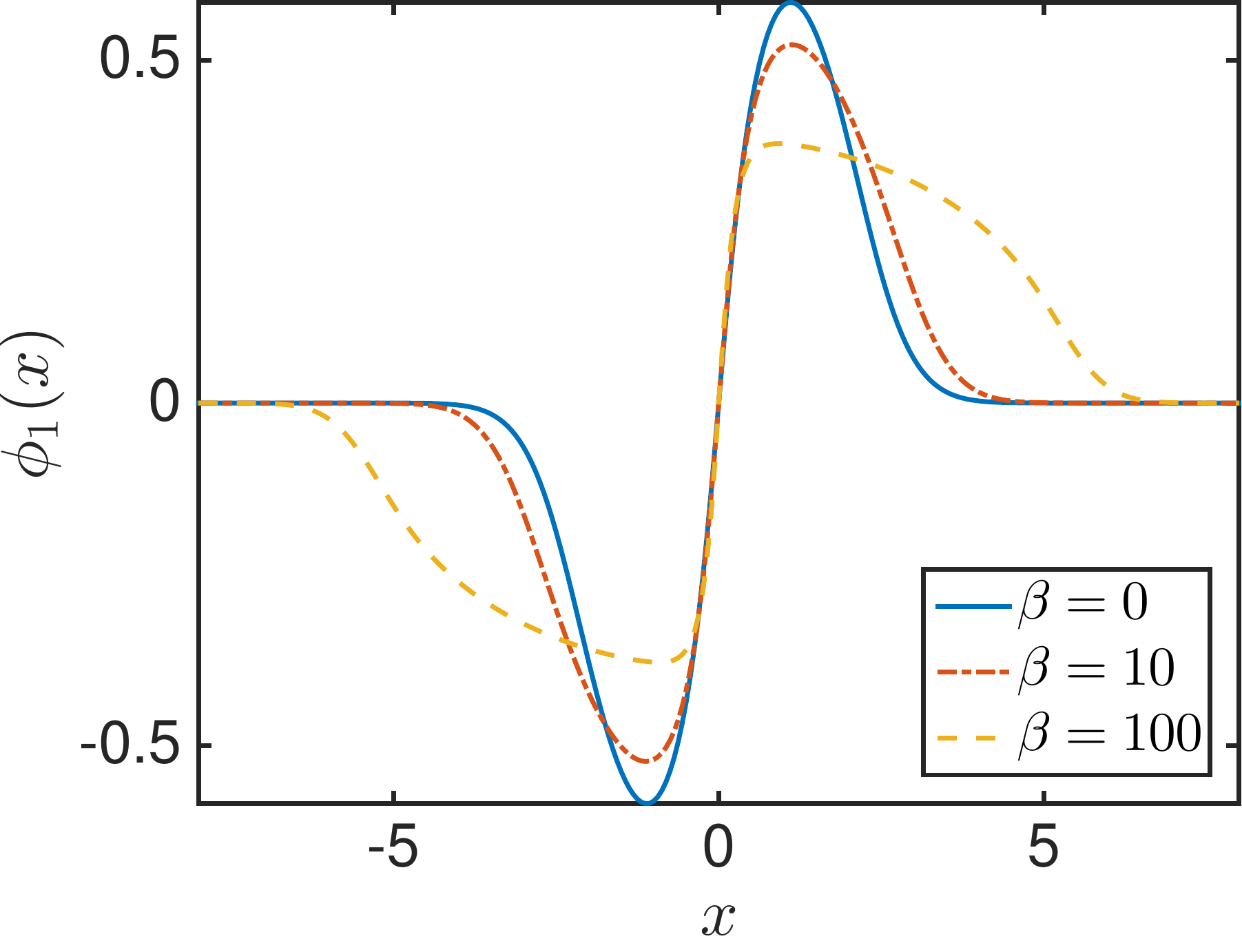,height=\hfig cm,width=\wwfig cm,angle=0}
\psfig{figure=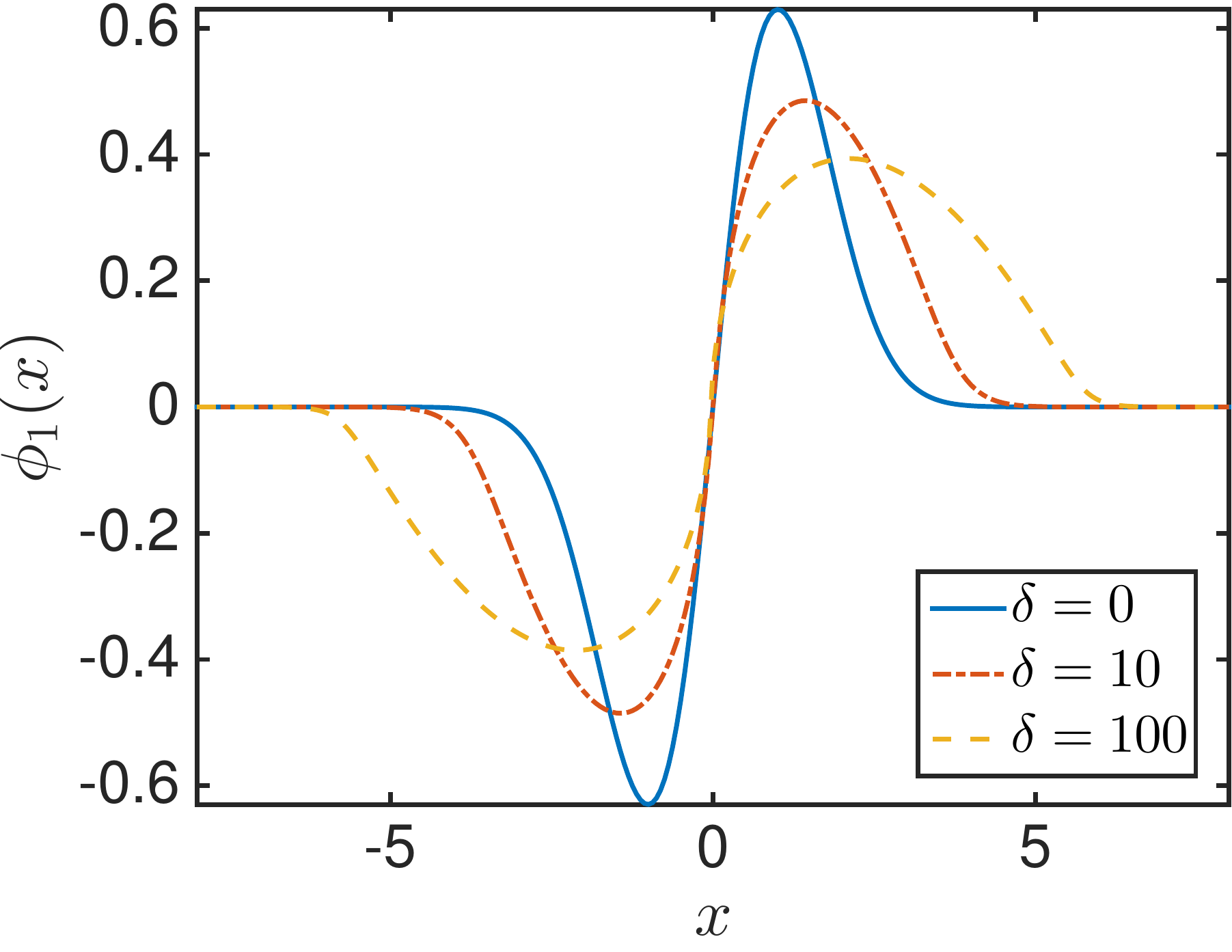,height=\hfig cm,width=\wwfig cm,angle=0}}
\caption{First excited states of Example \ref{exmp1} with fixed $\delta=1$ and different $\beta's$ (left) or fixed $\beta=1$ and different $\delta$'s (right).}
\label{fig:mgpe_1D_1st}
\end{figure}
 
Finally, we show two numerical examples  in 3D, namely Example \ref{exmp2} and Example \ref{exmp3}. 
In both examples, the confinement in the $x$-direction is the weakest. 
Therefore, the first excited state should be the one excited in the $x$-direction, which is similar to the classical GPE case where the first excited state is studied in details in \cite{asym,gap}. 
The first excited state can be computed by the BEFD-splitting method with  
the initial data 
\be
\phi_0(x)=\frac{2x}{\pi^{3/4}}e^{-(x^2+2y^2+2z^2)/2}
\ee
for both examples, 
and the numerical results are shown in Fig. \ref{fig:mgpe_test2}.

\begin{figure}[htbp]
\centerline{
\psfig{figure=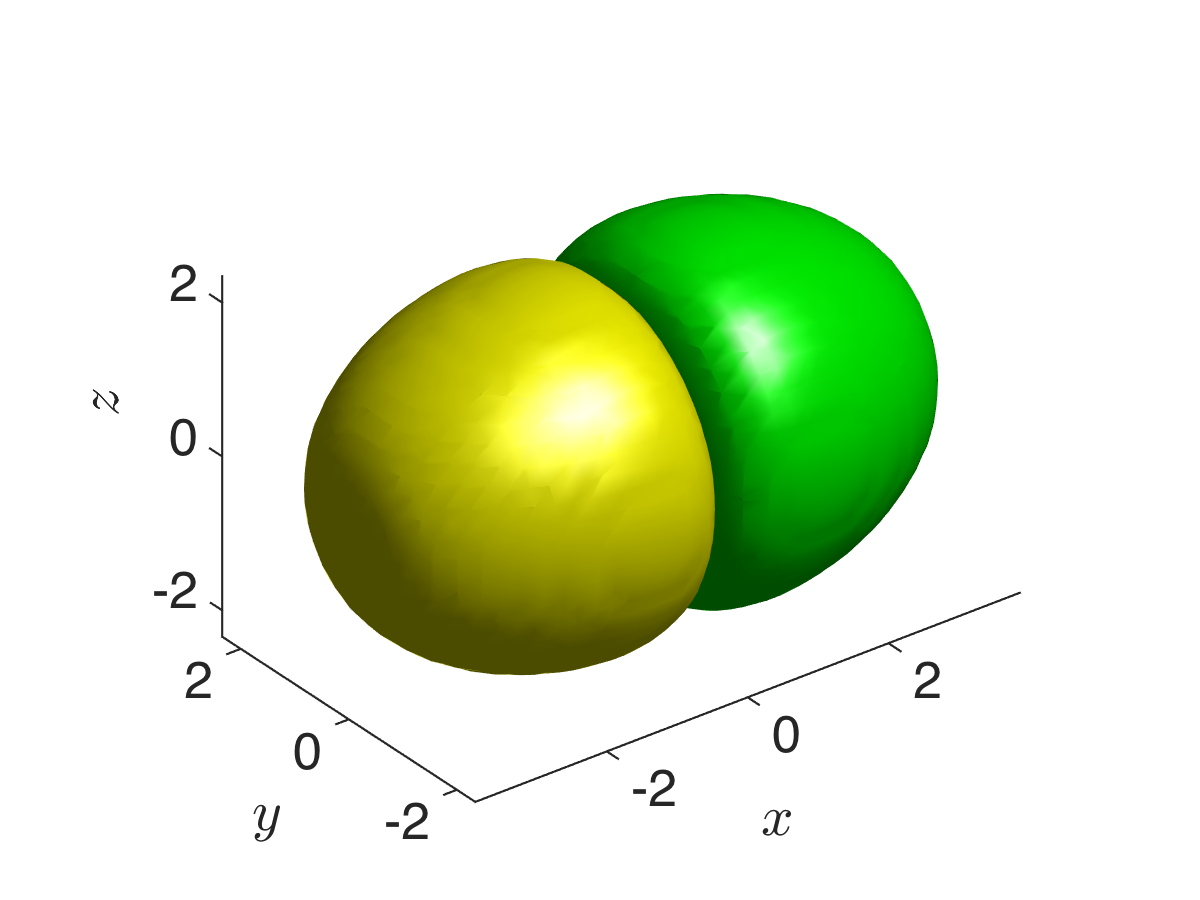,height=\hfig cm,width=\wfig cm,angle=0}
\psfig{figure=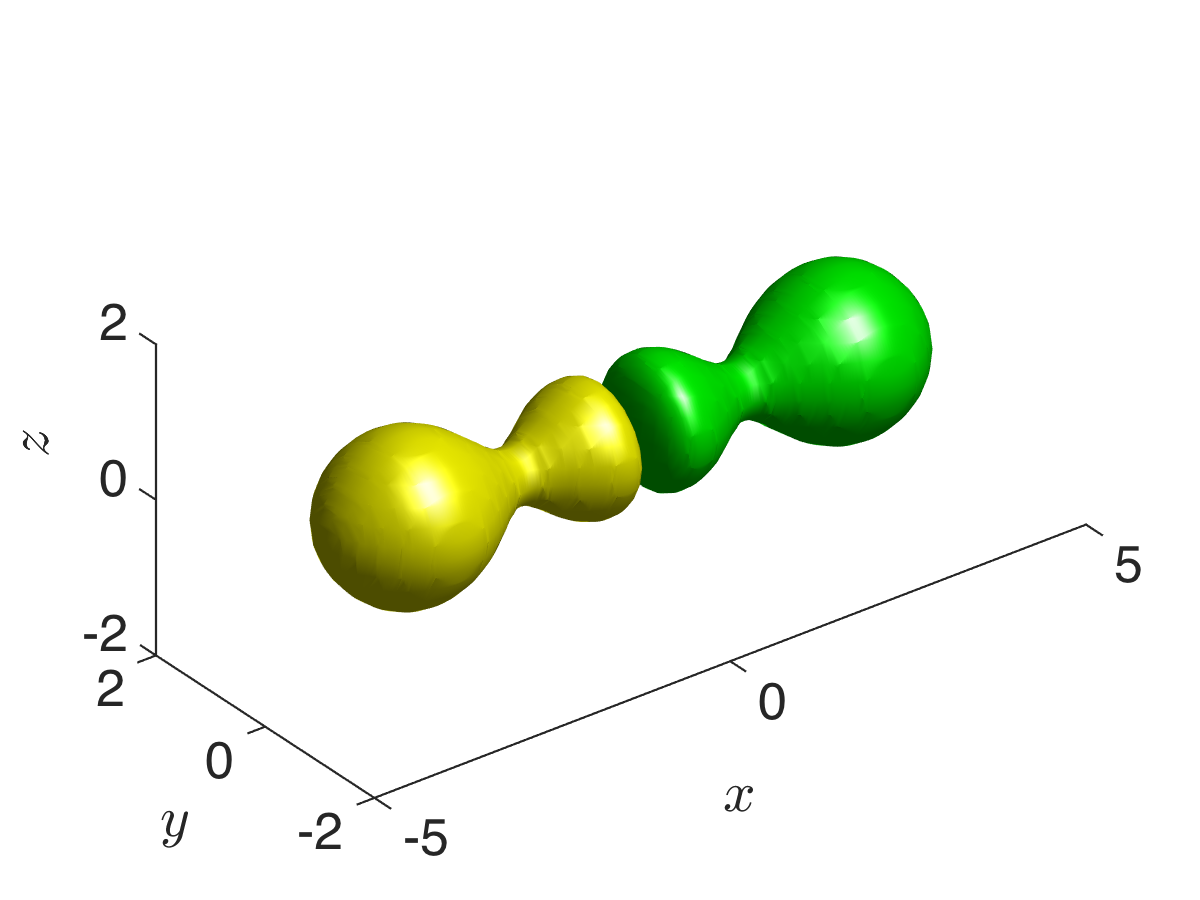,height=\hfig cm,width=\wfig cm,angle=0}}
\caption{First excited states of the MGPE in 3D for Example \ref{exmp2}  with isovalues $\pm0.01$ (left) and Example \ref{exmp3}  with isovalues  $\pm0.15$ (right). The green part is for the part with positive value while the yellow part is for the part with negative value.}
\label{fig:mgpe_test3}
\end{figure}

\section{Conclusion}\label{sec:conclusion}
In this paper, we generalized the normalized gradient flow method, which was originally designed for GPE, to compute the ground state of the MGPE \eqref{mgpe}. 
In particular, the CNGF-FD scheme \eqref{CNGF-FD} proposed can be proven to be normalization conservative and energy diminishing. However, the scheme is not suitable for practical numerical computation since a complicated nonlinear equation needs to be solved for each step.  

To design an easy, efficient and stable numerical scheme suitable for practical numerical computation, we introduced the attractive-repulsive splitting of the term $\delta\Delta(|\phi|^2)\phi$ and constructed the BEFD-splitting/BESP-splitting schemes, which are explicit-implicit schemes and only a linear equation is needed to be solved for each step. 
Numerical experiments indicate that the new schemes are much more stable than schemes constructed in naive ways and are competitive with other popular methods.  
With the multigrid technique and a relatively large time step, the scheme can be applied to MGPE with extremely strong nonlinearities, which implies that the BEFD-splitting/BESP-splitting schemes are extremely suitable for computing  the ground state of the MGPE. 

\section*{Acknowledgments}
This work was supported by the Academic Research
Fund of Ministry of Education of Singapore grant No.
R-146-000-223-112 (MOE2015-T2-2-146) and I would like to specially thank Prof.  Weizhu Bao for his stimulating discussion.

\end{document}